% ****** Start of file apssamp.tex ******
%
%   This file is part of the APS files in the REVTeX 4.2 distribution.
%   Version 4.2a of REVTeX, December 2014
%
%   Copyright (c) 2014 The American Physical Society.
%
%   See the REVTeX 4 README file for restrictions and more information.
%
% TeX'ing this file requires that you have AMS-LaTeX 2.0 installed
% as well as the rest of the prerequisites for REVTeX 4.2
%
% See the REVTeX 4 README file
% It also requires running BibTeX. The commands are as follows:
%
%  1)  latex apssamp.tex
%  2)  bibtex apssamp
%  3)  latex apssamp.tex
%  4)  latex apssamp.tex
%
\documentclass[twocolumn,superscriptaddress,english,pre,showpacs,longbibliography]{revtex4-2}
%\documentclass[%
%reprint,
%superscriptaddress,
%groupedaddress,
%unsortedaddress,
%runinaddress,
%frontmatterverbose, 
%preprint,
%preprintnumbers,
%nofootinbib,
%nobibnotes,
%bibnotes,
%amsmath,amssymb, aps,
%pra,
%prb,
%rmp,
%prstab,
%prstper,
%floatfix,
%]{revtex4-2}
\usepackage{hyperref}% add hypertext capabilities
\usepackage{graphicx}% Include figure files
\usepackage{dcolumn}% Align table columns on decimal point
\usepackage{bm}% bold math
\usepackage{subfigure}%

\usepackage{url}
\usepackage{amsthm,amsmath,amssymb}
\usepackage{mathrsfs}

\newtheorem{corollary}{Corollary}

%\usepackage[mathlines]{lineno}% Enable numbering of text and display math
%\linenumbers\relax % Commence numbering lines

%\usepackage[showframe,%Uncomment any one of the following lines to test 
%%scale=0.7, marginratio={1:1, 2:3}, ignoreall,% default settings
%%text={7in,10in},centering,
%%margin=1.5in,
%%total={6.5in,8.75in}, top=1.2in, left=0.9in, includefoot,
%%height=10in,a5paper,hmargin={3cm,0.8in},
%]{geometry}

\begin{document}

%\preprint{APS/123-QED}

\title{Message Passing Variational Autoregressive Network\\ for Solving Intractable Ising Models}

\author{Qunlong Ma}
\affiliation{
Henan Key Laboratory of Network Cryptography Technology, Zhengzhou 450001, China}

\author{Zhi Ma}
\affiliation{
Henan Key Laboratory of Network Cryptography Technology, Zhengzhou 450001, China}

\author{Jinlong Xu}
\affiliation{
Henan Key Laboratory of Network Cryptography Technology, Zhengzhou 450001, China}

\author{Hairui Zhang}
\affiliation{
Department of Algorithm, TuringQ Co., Ltd., Shanghai 200240, China}

\author{Ming Gao}
\email{gaoming@nudt.edu.cn}
\affiliation{
Henan Key Laboratory of Network Cryptography Technology, Zhengzhou 450001, China}

\date{\today}% It is always \today, today, but any date may be explicitly specified
\begin{abstract}

Many deep neural networks have been used to solve Ising models, including autoregressive neural networks, convolutional neural networks, recurrent neural networks, and graph neural networks. Learning a probability distribution of energy configuration or finding the ground states of a disordered, fully connected Ising model is essential for statistical mechanics and NP-hard problems. Despite tremendous efforts, a neural network architecture with the ability to high-accurately solve these fully connected and extremely intractable problems on larger systems is still lacking. Here we propose a variational autoregressive architecture with a message passing mechanism, which can effectively utilize the interactions between spin variables. The new network trained under an annealing framework outperforms existing methods in solving several prototypical Ising spin Hamiltonians, especially for larger spin systems at low temperatures. The advantages also come from the great mitigation of mode collapse during the training process of deep neural networks. Considering these extremely difficult problems to be solved, our method extends the current computational limits of unsupervised neural networks to solve combinatorial optimization problems.

\end{abstract}
%\keywords{Suggested keywords}
\maketitle
%\tableofcontents

%1.introduction
\section{INTRODUCTION}

Many deep neural networks have been used to solve Ising models \cite{Carleo2019, Akinori2023}, including autoregressive neural networks \cite{Wu2019VAN, Hibat2021vca, McNaughton2020, Marylou2022, Wu2021, Panfeng2021}, convolutional neural networks \cite{Pixel2016}, recurrent neural networks \cite{Hibat2020RNN, Hibat2021vca}, and graph neural networks \cite{Dai2017LC, Li2018COG, Gasse2019, Joshi2019AnEG, Speck2020LHS, Schuetz2021PIGNN, Schuetz2022}. The autoregressive neural networks model the distribution of high-dimensional vectors of discrete variables to learn the target Boltzmann distribution \cite{Larochelle2011, Gregor2014, Germain2015, Uria2016NADE} and allow for directly sampling from the networks. However, recent works question the sampling ability of autoregressive models in highly frustrated systems, with the challenge resulting from mode collapse \cite{Inack2022, Ciarella_2023}. The convolutional neural networks \cite{Pixel2016} respect the lattice structure of the 2D Ising model and achieve good performance \cite{Wu2019VAN}, but cannot solve models defined on non-lattice structures. Variational classical annealing (VCA) \cite{Hibat2021vca} uses autoregressive models with recurrent neural networks (RNN) \cite{Hibat2020RNN} and outperforms traditional simulated annealing (SA) \cite{Kirkpatrick1983} in finding ground-state solutions of Ising problems. This advantage comes from the fact that RNN can capture long-range correlations between spin variables by establishing connections between RNN cells in the same layer. Those cells need to be computed sequentially, which results in a very inefficient computation of VCA. Thus, in a particularly difficult class of fully connected Ising models, Wishart planted ensemble (WPE) \cite{WPE2020}, Ref.~\cite{Hibat2021vca} only solves problem instances with up to 32 spin variables. Since a Hamiltonian with the Ising form can be directly viewed as a graph, it is intuitive to use graph neural networks (GNN) \cite{HAMMOND2011, NIPS2016_GCN2, semi2017GCN3} to solve it. While a GNN-based method \cite{Schuetz2021PIGNN} has been employed in combinatorial optimization problems with system sizes up to millions, which first encodes problems in Ising forms \cite{Andrew2014} and then relaxes discrete variables into continuous ones to use GNN, some researchers argue that this method does not perform as well as classical heuristic algorithms \cite{Boettcher2022, Angelini2022}. In fact, the maximum cut and maximum independent set problem instances with millions of variables used in Ref.~\cite{Schuetz2021PIGNN} are defined on very sparse graphs and are not hard to solve \cite{Angelini2022}. Also, a naive combination of graph convolutional networks (GCN) \cite{semi2017GCN3} and variational autoregressive networks (VAN) \cite{Wu2019VAN} (we denote it as 'GCon-VAN' in this work) is tried, but performs poorly in statistical mechanics problems defined on sparse graphs \cite{Panfeng2021}. Reinforcement learning has also been used to find the ground state of Ising models \cite{Mills2020FTG, Fan2023DIRAC}. In addition, recently developed Ising machines have been used to find the ground state of Ising models and have shown impressive performance \cite{Mohseni2022IsingMA}, especially those based on physics-inspired algorithms, such as SimCIM (simulated coherent Ising machine) \cite{Tiunov2019, King2018emulating} and simulated bifurcation (SB) \cite{Hayato2021, Oshiyama2022BenchmarkOQ}.

Exploration of new methods to tackle Ising problems of larger scale and denser connectivity is of great interest. For example, finding the ground states of Ising spin glasses on two-dimensional lattices can be exactly solved in polynomial time, while ones in three or higher dimensions is a non-deterministic polynomial-time (NP) hard problem \cite{FBarahona_1982}. Ising models correspond to some problems defined on graphs, such as the maximum independent set problems, whose difficulty in finding the ground state might depend on node's degree being larger than a certain value \cite{Barbier2013, CojaOghlan2015}. Design of neural networks to solve Ising models on denser graphs would lead to development of powerful optimization tools and further shed light on computational boundary of deep-learning-assisted algorithms.

Due to the correspondence between Ising models and graph problems, existing Ising-solving neural network methods can be described by the message passing neural networks (MPNN) framework \cite{Gilmer2017}. MPNN can be used to abstract commonalities between them and determine the most crucial implementation details, which help to design more complex and powerful network architectures. Therefore, we reformulate existing VAN-based network architectures into this framework and then explore more variants for designing new network architectures with meticulously designed message passing (MP) mechanisms to better address intractable Ising models. Here we propose a variational autoregressive architecture with a message passing mechanism and dub it message passing variational autoregressive network (MPVAN). It can effectively utilize the interactions between spin variables, including whether there are couplings and coupling values, while previous methods only considered the former, i.e., the correlations.

\begin{figure}[h]
\includegraphics[scale=0.55]{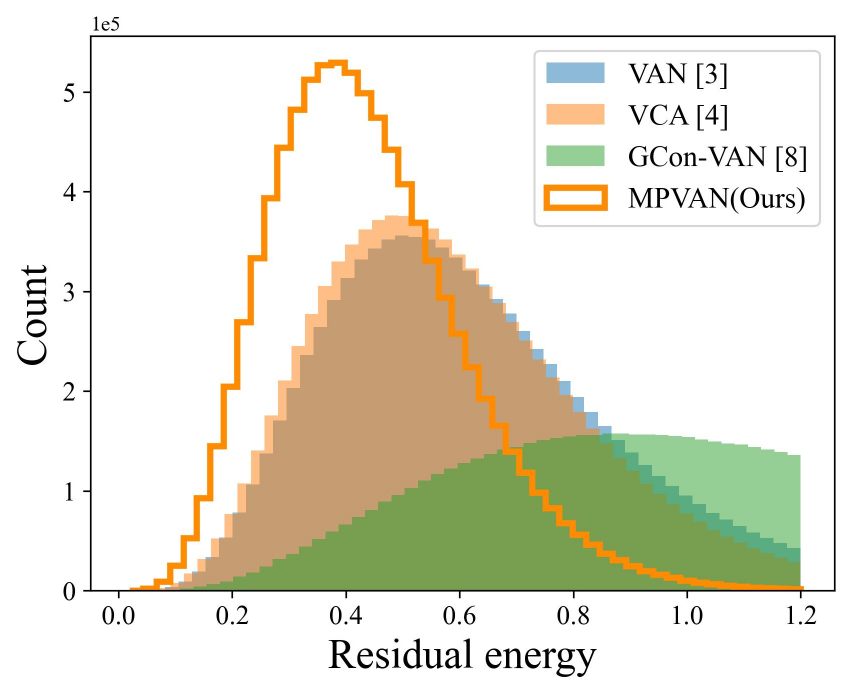}
\caption{\label{fig1} The residual energy histogram on the WPE with system size $N=60$ and difficulty parameter $\alpha=0.2$, which makes problem instances hard to solve. The residual energy is defined as the difference between the energy of the configurations sampled directly from the network after training and the energy of the ground state. Each method contains $9\times10^{6}$ configurations obtained from 30 instances and each for 30 runs.}
\end{figure}

We show the residual energy histogram on the Wishart planted ensemble (WPE) \cite{WPE2020} with a rough energy landscape in Fig.~\ref{fig1}. Compared to VAN \cite{Wu2019VAN}, VCA \cite{Hibat2021vca}, and GCon-VAN \cite{Panfeng2021}, the configurations sampled from MPVAN are concentrated on the regions with lower energy. Therefore, MPVAN has a higher probability of obtaining low-energy configurations, which is beneficial for finding the ground state, and it is also what combinatorial optimization is concerned about.

Numerical experiments show that our method outperforms existing methods in solving two classes of disordered, fully connected Ising models with extremely rough energy landscapes, the WPE \cite{WPE2020} and the Sherrington-Kirkpatrick (SK) model \cite{Sherrington1975SK}, including more accurately estimating the Boltzmann distribution and calculating lower free energy at low temperatures. The advantages also come from the great delay in the emergence of mode collapse during the training process of deep neural networks. Moreover, as the system size increases or the connectivity of graphs increases, MPVAN has greater advantages over existing methods in giving a lower upper bound to the energy of the ground state. Comparing to short-range Ising models such as the Edwards-Anderson model \cite{Edwards1975}, infinite-ranged interaction models (SK model and WPE) we considered are more challenging since there exist many loops of different lengths, which leads to more complicated frustrations. Considering these extremely difficult problems to be solved, our method extends the current computational limits of unsupervised neural networks \cite{Goodfellow2016} to solve intractable Ising models and combinatorial optimization problems \cite{Bengio2021}.

The paper is structured as follows. In Sec.~\ref{sec2}, we provide a detailed description of the message passing variational autoregressive network and provide a theoretical analysis. In Sec.~\ref{sec3}, we conduct experiments to benchmark our method and existing methods for solving intractable Ising models. We conclude and discuss in Sec.~\ref{sec4}.

%2. message passing variational autoregressive network
\section{Message Passing Variational Autoregressive Network}
\label{sec2}

%fig2：MPVAN framework
\begin{figure*}
\includegraphics[scale=0.063]{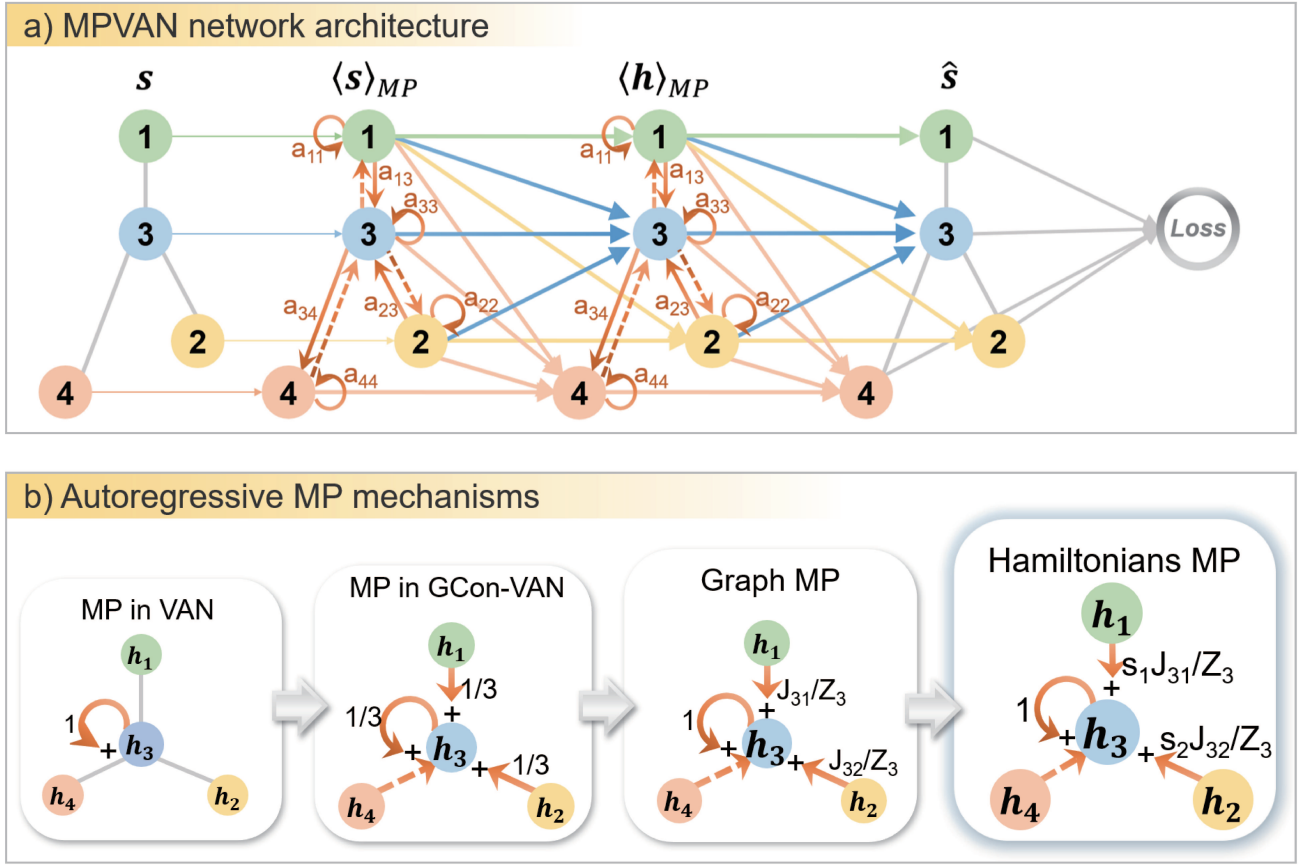}
\caption{\label{fig2} Schematic diagram of the network architecture of MPVAN and four autoregressive message passing mechanisms, which are shown on a problem instance with 3 edges and 4 spins. The spins are represented separately with numbers 1 to 4, and node features are represented separately with $h_i, i=1, 2, 3, 4$. (a) The network architecture of MPVAN. The spin configuration $\textbf{s}=\{\pm 1\}^N$ is the input to the network, $\hat{\textbf{s}}$ is the output, and $\textbf{h}$ denotes the hidden layer. The ${\langle \textbf{s} \rangle_{MP}}$ and ${\langle \textbf{h} \rangle_{MP}}$ are updated from $\textbf{s}$ and $\textbf{h}$ by message passing, respectively. The brown solid arrow indicates that neighboring nodes participate in message passing, while the brown dashed arrow indicates that there are connections between neighboring nodes but message passing is not performed to preserve the \textit{autoregressive\ property}. The $\{a_{ij}\}$ are the coefficient in message passing process, which vary for different message passing mechanisms. (b) The processes of four autoregressive message passing mechanisms when updating $h_{3}$. Under the MP mechanism used in VAN \cite{Wu2019VAN}, message passing is not performed, which is equivalent to the identity transformation of ${h_{3}}$. Under the MP mechanism used in GCon-VAN \cite{Panfeng2021}, message passing performs according to the adjacency matrix $A$, which updates the ${h_{3}}$ based on the topology structure of the graph. For the Graph MP mechanism we designed, message passing is performed by using the couplings $J_{ij}$ of the Hamiltonian, which updates $h_{3}$ based on the couplings and $Z_3=|J_{31}|+|J_{32}|$. The Hamiltonians MP mechanism we designed updates ${h_{3}}$ based on the couplings and values of neighboring spins $s_1$ and $s_2$, which is also the message passing mechanism used in MPVAN.}
\end{figure*}

The message passing variational autoregressive network (MPVAN) is to solve Ising models with the Hamiltonian as
\begin{equation}
    \label{eq1}
    H = -\sum_{\left<i,j\right>}{J_{ij}s_{i}s_{j}},
\end{equation}
where $\{s_{i}\}_{i=1}^{N} \in \{\pm 1\}^N$ are $N$ spin variables, and $\left<i,j\right>$ denotes that there is a non-zero coupling $J_{ij}$ between $s_i$ and $s_j$.

MPVAN is composed of an autoregressive message passing mechanism and a variational autoregressive network architecture, and its network architecture is shown in Fig.~\ref{fig2}(a). The input to MPVAN is configurations $\textbf{s}=\{s_{i}\}_{i=1}^{N}$ in a predetermined order of spins, and the $i_{th}$ component of the output, $\hat{s}_{i}$, means the conditional probability of $s_i$ taking $+1$ when given values of spins in front of it, $\textbf{s}_{<i}$, i.e., $\hat{s}_{i}=q_{\theta}(s_{i}=+1|\textbf{s}_{<i})$.

As in VAN \cite{Wu2019VAN}, the variational distribution of MPVAN is decomposed into product of conditional probabilities as
\begin{equation}
    \label{eq2}
    q_{\theta}(\textbf{s})=\prod^{N}_{i=1}{q_{\theta}}(s_i|s_1,s_2,\dots,s_{i-1}),
\end{equation}
where $q_{\theta}(\textbf{s})$ represents the variational joint probability and $q_{\theta} (s_i|s_1,s_2,\dots,s_{i-1})$ denotes the variational conditional probability, both of which are parametrized by trainable parameters $\theta$.

%2.1 MPVAN layer
\subsection{MPVAN Layer}

MPVAN are constructed by stacking multiple \textit{message\ passing\ variational\ autoregressive\ network\ layers}. A single \textit{MPVAN layer} is composed of an autoregressive message passing process and nonlinear functions with trainable parameters defined by VAN \cite{Wu2019VAN}.

The input to the MPVAN layer is a set of node features, $\textbf{h}=\{\Vec{h}_1, \Vec{h}_2, \dots, \Vec{h}_N\}, \Vec{h}_i\in(0,1)^{F}$, where $F$ is the number of training samples. The layer produces a new set of node features, $\textbf{h}^{o}=\{\Vec{h}_1^{o}, \Vec{h}_2^{o}, \dots, \Vec{h}_N^{o}\}, \Vec{h}_i^{o}\in \mathbb{R}^F$, as the output. For brevity, we set $F=1$ and donate $\Vec{h}_i$ and $\Vec{h}_i^{o}$ as $h_i$ and $h_i^{o}$, respectively. The $\textbf{h}^{o}$ are obtained by
\begin{equation}
    \label{eq3}
\textbf{h}^{o}=\sigma(\langle \textbf{h}\rangle _{MP}W+b),
\end{equation}
where $sigmoid$ activation function $\sigma(x)=\frac{1}{1+e^{-x}}$ ranging in $(0,1)$ and thus $h_i^{o}\in(0,1)$. The $\langle \textbf{h}\rangle _{MP}=\{\langle h_1\rangle _{MP}, \langle h_2\rangle _{MP}, \dots, \langle h_N\rangle _{MP}\}$, $\langle h_i\rangle _{MP}\in \mathbb{R}$, denotes the updated node features from $\textbf{h}$ by \textit{message passing}. The $W$ and $b$ are layer-specific trainable parameters, and $W$ is a triangular matrix to ensure the autoregressive property \cite{Wu2019VAN,Gregor2014, Germain2015, Uria2016NADE}.

To show how to get $\langle \textbf{h}\rangle _{MP}$, we first review the message passing mechanism defined in the MPNN framework \cite{Gilmer2017}. We describe message passing operations on the current layer with node features $h_{i}$ and edge features $J_{ij}$. The message passing phase includes how to obtain neighboring messages $m_i$ and how to update node features $h_i$, which are defined as
\begin{equation}
\label{eq4}
\begin{aligned}
    &m_{i}=\sum_{j\in N_a(i)}M(h_{i},h_{j},J_{ij}),\\
    &\langle h_{i} \rangle _{MP}=U(h_{i},m_{i}),
\end{aligned}
\end{equation}
where $h_j$ is the node feature of $j$, and
\begin{equation}
\label{eq5}
    N_a(i)=\{j\ |\ j<i, J_{ij}\neq 0\},
\end{equation}
which denotes the neighbors located before node $i$. The $N_a(i)$ is used to preserve autoregressive property, which is different from general message passing mechanisms \cite{Gilmer2017} and graph neural networks \cite{semi2017GCN3, velickovic2018graph, Hamilton2017}. The message aggregation function $M(h_{i},h_{j},J_{ij})$ and node feature update function $U(h_{i},m_{i})$ are different across message passing mechanisms.

Now, existing VAN-based methods can be reformulated into combinations of VAN and different MP mechanisms. Then, we explore their variants and propose our method.

In VAN \cite{Wu2019VAN}, there is no message passing process from neighboring nodes. Therefore, the node features $h_{i}$ are updated according to
\begin{equation}
\label{eq6}
\begin{aligned}
    &m_{i} = 0,\\
    &\langle h_{i} \rangle _{MP} = h_{i},
\end{aligned}
\end{equation}
which is the first MP mechanism in Fig.~\ref{fig2}(b). Another successful variational autoregressive network approach is variational classical annealing (VCA) \cite{Hibat2021vca}, which uses an RNN architecture to take into account the correlation between hidden units in the same layer. The network structure of VCA is a special RNN architecture designed according to the topology of the model to be solved, thus taking into account the features of neighboring nodes through trainable parameters. Therefore, it is difficult to represent VCA as MPVAN with a special MP mechanism, since MP is free of trainable parameters.

In GCon-VAN \cite{Panfeng2021}, a combination of GCN \cite{semi2017GCN3} and VAN, the $\langle h_{i}\rangle_{MP}$ are obtained by
\begin{equation}
\label{eq7}
\begin{aligned}
    &m_{i} = \sum_{j\in N_a(i)} A_{ij} h_{j},\\
    &\langle h_{i}\rangle_{MP} = \frac{m_{i}+h_i}{deg(i)+1},
\end{aligned}
\end{equation}
where $A$ is the adjacency matrix of the graph, and $deg(i)$ represents the degree of node $i$. The $h_i$ is updated based on the connectivity of the graph, which is the second MP mechanism in Fig.~\ref{fig2}(b).

%2->3
However, GCon-VAN performs poorly on sparse graphs in calculating physical quantities such as correlations and free energy \cite{Panfeng2021} and it performs even worse on dense graphs in our trial. It may be because GCon-VAN only considers connectivity and ignores the weights of neighboring node features. Also, from the results of VAN in Fig.~\ref{fig1}, the node feature $h_i$ itself should be highlighted rather than the small weight $\frac{1}{deg(i)+1}$ in Eq.~(\ref{eq7}).

Therefore, we explore more variants and propose three MP mechanisms. The $m_i$ are obtained by
\begin{subequations}
\label{eq8}
    \begin{align}
        & m_i = \sum_{j\in N_a(i)} |J_{ij}| h_{j} \label{eq8a},\\
        & m_i = \sum_{j\in N_a(i)} J_{ij} h_{j} \label{eq8b}, \\
        & m_i = \sum_{j\in N_a(i)} J_{ij} s_{j} h_{j}'\label{eq8c},
    \end{align}
\end{subequations}
where
\begin{equation}
\label{eq9}
h_{j}'=\frac{1+s_j}{2} h_{j}+ \frac{1-s_j}{2} (1-h_{j}).
\end{equation}
In above three mechanisms, the $\langle h_{i}\rangle_{MP}$ are obtained by
\begin{equation}
\label{eq10}
\langle h_{i} \rangle _{MP} = \frac{m_i}{\sum_{j\in N_a(i)}|J_{ij}|} + h_{i}.
\end{equation}

For the neighboring message $m_i$ in Eq.~(\ref{eq8a}), we consider the weight of neighboring node features instead of averagely passing those features in Eq.~(\ref{eq7}). Also, we increase the weight of $h_i$ in Eq.~(\ref{eq10}) for all mechanisms we designed. However, the Eq.~(\ref{eq8a}) ignores the influence of the sign of couplings $\{J_{ij}\}$.

Thus, we propose the MP mechanism made of Eq.~(\ref{eq8b}) and Eq.~(\ref{eq10}), which is the third MP mechanism in Fig.~\ref{fig2}(b). It uses the values of couplings $\{J_{ij}\}$ in the Hamiltonians to weight the neighboring node features. Since it is based on the graph defined by the target Hamiltonian, we dub it 'graph message passing mechanism' (Graph MP).

Previous methods and the above two MP mechanisms we designed do not make full use of the interactions between spin variables of the Hamiltonian in Eq.~(\ref{eq1}), and known values of $\textbf{s}_{<i}$ when updating the $h_i$. Intuitively, it may be helpful to consider those interactions and values of $\textbf{s}_{<i}$ in the message passing process.

Therefore, we propose the Hamiltonian MP mechanism composed of Eq.~(\ref{eq8c}) and Eq.~(\ref{eq10}), which is the fourth MP mechanism in Fig.~\ref{fig2}(b) and also the mechanism used in MPVAN. The $s_j=\pm 1$ is the known value of the neighboring spin $j$, and $h_j'\in(0,1)$ represents the probability of the spin $j$ taking $s_j$.

Based on the above definitions, it is reasonable to use $h_{j}'$ rather than $h_{j}$ in message passing. To illustrate, suppose for the neighboring spin $j$, $s_{j}=-1$ and $h_{j}=0.2$. Then, if we repeatedly sample spin $j$, we could obtain $s_{j}=-1$ with a probability of 0.8. It means that neighboring spin $j$ should have a greater impact on $h_i$ when it takes $-1$ than $+1$, but $h_{j}=0.2$ is not as good as $h_{j}'=0.8$ to reflect the great importance of $s_{j}=-1$. On the other hand, suppose for the neighboring spin $j$, $s_{j}=-1$ and $h_{j}=0.8$. We could obtain $s_{j}=-1$ with a probability of 0.2 if we sample spin $j$ again. At this time, using $h_{j}'$ instead of $h_{j}$ could show little importance of $s_{j}=-1$. So we think it makes sense to use $h_{j}'$ to reflect the effect of the spin $j$ taking $s_j$ in message passing.

Taking the graph with 4 nodes and 3 edges in Fig.~\ref{fig2}(b) as an example, applying the above four message passing mechanisms, the ${ \langle h_{3} \rangle }_{MP} $ are obtained as
\begin{subequations}
\label{eq11}
    \begin{align}
        &\langle h_{3}\rangle_{MP\ in\ VAN} =h_3, \label{11a}\\
        &\langle h_{3}\rangle_{MP\ in\ GCon-VAN}=\frac{h_{1}+h_{2}+h_{3}}{3}, \label{11b} \\
        &\langle h_{3}\rangle_{Graph\ MP}=\frac{J_{31}s_{1}h_{1}+J_{32}s_{2}h_{2}}{|J_{31}|+|J_{32}|} + h_{3}, \label{11c} \\
        &\langle h_{3}\rangle_{Hamiltonians\ MP}=\frac{J_{31}s_{1}h_{1}'+J_{32}s_{2}h_{2}'}{|J_{31}|+|J_{32}|} + h_{3}. \label{11d}
    \end{align}
\end{subequations}

We also consider more MP mechanisms and compare their performance in Appendix.~\ref{appen1}, where Hamiltonians MP always performs best. In addition, since an arbitrary message passing variational autoregressive network is constructed through stacking MPVAN layers, we also discuss the effect of the number of layers on the performance. MPVAN exhibits characteristics similar to GNN, i.e., there exists an optimal number of layers, which can be found in Appendix.~\ref{appen2}.

%3.2 train MPVAN
\subsection{Training MPVAN}

We then describe how to train MPVAN. In alignment with the variational approach employed in VAN, the variational free energy is used as loss function,
\begin{equation}
    \label{eq12}
    F_q=\sum_{\textbf{s}}{q_{\theta}(\textbf{s})\left[{E}(\textbf{s})+\frac{1}{\beta}\ln{q_{\theta}(\textbf{s})}\right]},
\end{equation}
where $\beta=1/T$ is inverse temperature, and $E(\textbf{s})$ is the $H$ of Eq.~(\ref{eq1}) related to a given configuration $\textbf{s}$. The configuration $\textbf{s}$ follows Boltzmann distribution $p(\textbf{s})={e^{-\beta E(\textbf{s})}}/Z$, where $Z=\sum_{\textbf{s}}{e^{-\beta E(\textbf{s})}}$. Since the KL divergence between the variational distribution $q_{\theta}$ and the Boltzmann distribution $p$ is defined as $D_{KL}(q_{\theta}||p)=\sum_{\textbf{s}}{q_{\theta}(\textbf{s})ln(\frac{q_{\theta}(\textbf{s})}{p(\textbf{s})})}=\beta(F_q-F)$ and is always non-negative, $F_q$ is the upper bound to the free energy $F=-(1/\beta)\ln Z$.

The gradient of $F_q$ with respect to the parameters $\theta$ is
\begin{equation}
\begin{aligned}
    \label{eq13}
\bigtriangledown_{\theta}F_q=\sum_{\textbf{s}}{q_{\theta}(\textbf{s})}{\left\{\left[{E}(\textbf{s})+\frac{1}{\beta}\ln{q_{\theta}(\textbf{s})})\right]\bigtriangledown_{\theta}\ln{q_{\theta}(\textbf{s})}\right\}}.
\end{aligned}
\end{equation}
With the computed gradients $\bigtriangledown_{\theta}F_q$, we iteratively adjust the parameters of the networks until the $F_q$ stops decreasing.

MPVAN is trained under an annealing framework, i.e., starting from the initial temperature $T_{initial}$ and gradually decreasing the temperature by annealing $N_{annealing}$ steps until the end temperature $T_{final}$. During each annealing step, we decrease the temperature and subsequently apply $N_{training}$ gradient-descent steps to update the network parameters, thereby minimizing the variational free energy $F_q$. As with the VAN \cite{Wu2019VAN}, the network is trained using the data produced by itself. After training, we can sample directly from the networks to calculate the upper bound to the free energy and other physical quantities such as entropy and correlations.

%3.3 Analysis for MPVAN
\subsection{Theoretical Analysis for MPVAN}

Compared to VAN\cite{Wu2019VAN}, MPVAN has an additional Hamiltonians message passing process. In this section, we will provide a theoretical and mathematical analysis of the advantages of the Hamiltonians message passing mechanism in MPVAN in Corollary.~\ref{cor1} below.

The goal of MPVAN is to be able to accurately estimate the Boltzmann distribution, i.e., configurations with low energy have a high probability and configurations with high energy have a low probability. Specifically, MPVAN is trained by minimizing the variational free energy $F_q$, composed of $\mathbb{E}_{\textbf{s}\sim q_{\theta}(\textbf{s})}{E}(\textbf{s})$ and $\mathbb{E}_{\textbf{s}\sim q_{\theta}(\textbf{s})}\ln{q_{\theta}(\textbf{s})}$.

\begin{corollary}
\label{cor1}

The Hamiltonians message passing process makes $\mathbb{E}_{\textbf{s}\sim q_{\theta}(\textbf{s})}{E}(\textbf{s})$ and $\mathbb{E}_{\textbf{s}\sim q_{\theta}(\textbf{s})}\ln{q_{\theta}(\textbf{s})}$ smaller, and therefore variational free energy $F_q$ smaller compared to no message passing.
\end{corollary}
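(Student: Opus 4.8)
\textit{The plan is to} treat the Hamiltonians message as a controllable perturbation of the conditional probabilities and to argue the two terms of $F_q$ separately, since --- as will become clear --- they decrease for genuinely different reasons. I would introduce a scalar $\lambda\in[0,1]$ scaling the message, writing $\langle h_i\rangle_{MP}(\lambda)=h_i+\lambda\,m_i/\sum_{j\in N_a(i)}|J_{ij}|$ with $m_i=\sum_{j\in N_a(i)}J_{ij}s_jh_j'$, so that $\lambda=0$ reproduces VAN (Eq.~(\ref{eq6})) and $\lambda=1$ reproduces MPVAN (Eqs.~(\ref{eq8c}),(\ref{eq10})), with all trainable parameters held fixed along the path. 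The key structural fact, which I would isolate as a lemma, is that because $h_j'\in(0,1)$ is strictly positive (Eq.~(\ref{eq9})), each summand $J_{ij}s_jh_j'$ carries the sign of $J_{ij}s_j$; hence $m_i$ is a positively reweighted copy of the local field $\phi_i=\sum_{j\in N_a(i)}J_{ij}s_j$ that governs the energetically favored orientation of spin $i$ (the part of $E$ involving $s_i$ and its predecessors is $-s_i\phi_i$, minimized by $s_i=\mathrm{sign}(\phi_i)$). Thus turning on $\lambda$ shifts the pre-activation and, through the monotone $\sigma$, nudges $\hat s_i=q_\theta(s_i=+1\mid\textbf s_{<i})$ toward the value that lowers $-s_i\phi_i$.

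For the energy term I would then differentiate. Writing $\mathbb E_{q_\lambda}[E]=-\sum_{\langle i,j\rangle,\,j<i}J_{ij}\,\mathbb E_{q_\lambda}[s_j(2\hat s_i-1)]$ and invoking the lemma, the first-order effect of $\lambda$ is to raise $\mathbb E[s_i\phi_i]$ at every site, so $\tfrac{d}{d\lambda}\mathbb E_{q_\lambda}[E]\le 0$; integrating from $0$ to $1$ yields $\mathbb E_{\textbf s\sim q_\theta}E(\textbf s)$ smaller than without message passing. I would carry this out cleanly for a single MPVAN layer, where $\hat s_i=\sigma(\sum_{k\le i}W_{ik}\langle h_k\rangle_{MP}+b_i)$ makes the monotone dependence explicit, and then comment on the layer-by-layer and multi-site coupling corrections.

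The entropy term is the main obstacle, and I do not expect a fixed-parameter argument to deliver it: the same sign-alignment that lowers the energy also sharpens each conditional toward its context-consistent value, which tends to reduce the per-site entropies $H_b(\hat s_i)$ and hence, by the chain rule $\mathbb E_{q}\ln q_\theta=-\sum_i\mathbb E\,H_b(\hat s_i)$, would push $\mathbb E\ln q_\theta$ the \emph{wrong} way at fixed $\theta$. I would therefore argue this term through the trained minimizer and the mode-collapse mechanism emphasized in the text. The point is that message passing lets the sampled context $\textbf s_{<i}$ --- rather than the shared parameters $W,b$ --- carry the information distinguishing the many near-degenerate low-energy configurations at low $T$: each sampled trajectory is kept self-consistent by $m_i$ without a single global mode being forced into $\theta$. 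Consequently the attainable minimizer of $F_q$ can spread over the degenerate ground-state manifold instead of collapsing to one mode, raising the entropy and so making $\mathbb E_{\textbf s\sim q_\theta}\ln q_\theta(\textbf s)$ smaller than at the mode-collapsed VAN optimum.

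Finally I would assemble the pieces: since $F_q=\mathbb E_{q}E+\tfrac1\beta\mathbb E_{q}\ln q_\theta$ and both summands are smaller for the Hamiltonians message passing network, $F_q$ is smaller, which is the claim. The fully rigorous content lives in the energy half (the sign lemma together with the $\lambda$-derivative); the entropy half rests on the harder and necessarily less formal assertion that message passing enlarges the set of modes the architecture can represent at fixed capacity, and making that quantitative --- for instance bounding the entropy gain in terms of the ground-state degeneracy --- is where the real work would lie.
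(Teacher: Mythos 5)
Your energy argument follows essentially the same route as the paper's \textbf{Step 1}: both reduce the claim to the local Hamiltonian $-s_i\sum_{j<i}J_{ij}s_j$ and argue that, because $h_j'\in(0,1)$ is positive, the message $m_i=\sum_{j<i}J_{ij}s_jh_j'$ shifts $\langle h_i\rangle_{MP}$ in the direction that favors the energetically preferred orientation of $s_i$. Your $\lambda$-interpolation is a cleaner packaging, but it inherits exactly the gap the paper itself concedes: $m_i$ is a \emph{reweighted} local field, and since the weights $h_j'$ differ across $j$, $\mathrm{sign}(m_i)$ need not equal $\mathrm{sign}\bigl(\sum_{j<i}J_{ij}s_j\bigr)$; the paper only establishes that the surrogate $H_{local}'=-s_i\sum_{j<i}J_{ij}s_jh_j'$ becomes more likely to be negative, and then asserts (``although not identical, it is possible to predict $H_{local}$ through $H_{local}'$'') that this transfers to $H_{local}$. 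Your statement that the first-order effect ``raises $\mathbb{E}[s_i\phi_i]$ at every site'' overstates what either argument actually proves, so you should weaken it to the same surrogate claim.

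On the entropy term you genuinely diverge from the paper, and in an instructive way. The paper's \textbf{Step 2} invokes concavity of $\ln$: message passing spreads the probabilities of the sampled configurations (raising those of low-energy configurations, lowering those of high-energy ones), and $\tfrac{1}{2}[y(a)+y(b)]<y(\tfrac{a+b}{2})$ is then used to conclude that $\tfrac{1}{N_s}\sum_k\ln q_\theta(\textbf{s}_k)$ decreases. You instead reject any fixed-parameter argument --- correctly observing that sharpening each conditional lowers the per-site entropies $H_b(\hat s_i)$ and hence \emph{raises} $\mathbb{E}_{q}\ln q_\theta=-\sum_i\mathbb{E}\,H_b(\hat s_i)$ --- and argue through the trained minimizer via mitigation of mode collapse. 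This is a real tension with the paper's reasoning: the concavity step compares the average of logs at fixed samples, not the self-expectation $\mathbb{E}_{q_{MP}}\ln q_{MP}$ versus $\mathbb{E}_{q}\ln q$, and your objection applies to it. Neither your mode-collapse argument nor the paper's concavity argument is rigorous (the corollary is at bottom a heuristic claim, supported empirically by Fig.~\ref{fig3}), but you should be aware that your route is not the one the paper takes, and that your own text implicitly identifies the weak point in the paper's \textbf{Step 2} rather than reproducing it.
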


\begin{proof}
We discuss message passing process that makes $\mathbb{E}_{\textbf{s}\sim q_{\theta}(\textbf{s})}{E}(\textbf{s})$ and $\mathbb{E}_{\textbf{s}\sim q_{\theta}(\textbf{s})}\ln{q_{\theta}(\textbf{s})}$ smaller separately.\\

\noindent \textbf{Step 1:} Making $\mathbb{E}_{\textbf{s}\sim q_{\theta}(\textbf{s})}{E}(\textbf{s})$ smaller.

When training MPVAN, it is impossible to exhaust all configurations to calculate the variational free energy $F_q$, so we use the mathematical expectation of training samples to estimate it. Therefore, we have
\begin{equation}
\label{eq14}
\mathbb{E}_{\textbf{s}\sim q_{\theta}(\textbf{s})}{E}(\textbf{s})= \frac{1}{N_s} \sum_{k=1}^{N_s}{E}(\textbf{s}_k),
\end{equation}
where $\textbf{s}_k$ is $k_{th}$ training samples and $N_s$ is the number of all training samples.

Consider the Hamiltonians message passing process (composed of Eq.~(\ref{eq8c}) and Eq.~(\ref{eq10})) for updating the $h_i$ as an example to analyze how message passing makes $\mathbb{E}_{\textbf{s}\sim q_{\theta}(\textbf{s})}{E}(\textbf{s})$ smaller. Since the message passing process maintains autoregressive property, making ${E}(\textbf{s})$ smaller for any configuration is equivalent to making the local Hamiltonian defined as $H_{local} = -\sum_{j<i}J_{ij}s_{i}s_{j}$ smaller, when given the values of its neighboring spins $\textbf{s}_{<i}$.

Since $H_{local}=-s_{i}\sum_{j<i}J_{ij}s_{j}$ with known $\sum_{j<i}J_{ij}s_{j}$, we are concerned about how the message passing affects the probability of $s_{i}$ taking $+1$ or $-1$. According to Eq.~(\ref{eq10}), the value of $\sum_{j<i}J_{ij}s_{j}h_{j}'$ plays an important role in that, and we discuss in 2 cases.

\textit{Case} 1: if $\sum_{j<i}J_{ij}s_{j}h_{j}'>0$, then according to Eq.~(\ref{eq10}), we have
\begin{equation}
\label{eq15}
\langle h_i\rangle_{MP}>h_{i},
\end{equation}
i.e., 
\begin{equation}
\label{eq16}
Pr[\langle s_{i}\rangle_{MP}=+1|\textbf{s}_{<i}]>Pr[s_i=+1|\textbf{s}_{<i}],
\end{equation}
where
\begin{subequations}
\label{eq17}
    \begin{align}
        &\langle s_{i}\rangle_{MP}=Bernoulli(\langle h_i\rangle_{MP}), \label{17a}\\
        &s_{i}=Bernoulli(h_i)\label{17b}.
    \end{align}
\end{subequations}
The $Bernoulli(p)$ denotes sampling from the Bernoulli distribution to output $+1$ with a probability $p$. Thus, we have
\begin{equation}
\label{eq18}
Pr[\langle H_{local}'\rangle_{MP}<0] > Pr[H_{local}'<0],
\end{equation}
where 
\begin{subequations}
\label{eq19}
    \begin{align}
        &\langle H_{local}'\rangle_{MP}=-\langle s_{i}\rangle_{MP}\sum_{j<i}J_{ij}s_{j}h_{j}',\label{19a}\\
        &H_{local}'= -s_{i}\sum_{j<i}J_{ij}s_{j}h_{j}',\label{19b}
    \end{align}
\end{subequations}
and $Pr[\cdots]$ denotes the probability of something.

\textit{Case} 2: if $\sum_{j<i}J_{ij}s_{j}h_{j}'<0$, then according to Eq.~(\ref{eq10}), we have
\begin{equation}
\label{eq20}
\langle h_i\rangle_{MP}<h_{i},
\end{equation}
i.e., 
\begin{equation}
\label{eq21}
Pr[\langle s_{i}\rangle_{MP}=-1|\textbf{s}_{<i}]>Pr[s_i=-1|\textbf{s}_{<i}],
\end{equation}
and also obtain the Eq.~(\ref{eq18}). Therefore, regardless of the value of $\sum_{j<i}J_{ij}s_{j}h_{j}'$, message passing process always makes $H_{local}'$ smaller compared with no message passing.

It can be found that the difference between $H_{local}$ and $H_{local}'$ is that the latter has $h_{j}'$. It encapsulates more information about the spin $j$ beyond the current configuration spin value $s_{j}$, which can indicate how much influence the message corresponding to the neighboring node features $h_j$ has on $\langle h_{i}\rangle_{MP}$ and, combining with the coupling $J_{ij}$, performs a weighted message passing. Thus, although not identical, it is possible to predict $H_{local}$ through $H_{local}'$ and thus we get the conclusion that Hamiltonians message passing mechanism makes local Hamiltonian smaller.\\

\noindent \textbf{Step 2:} Making $\mathbb{E}_{\textbf{s}\sim q_{\theta}(\textbf{s})}\ln{q_{\theta}(\textbf{s})}$ smaller.

Similar to $\mathbb{E}_{\textbf{s}\sim q_{\theta}(\textbf{s})}E(\textbf{s})$, we have
\begin{equation}
\label{eq22}
\mathbb{E}_{\textbf{s}\sim q_{\theta}(\textbf{s})}\ln{q_{\theta}(\textbf{s})}= \frac{1}{N_s} \sum_{k=1}^{N_s}\ln{q_{\theta}(\textbf{s}_k)}.
\end{equation}

The second derivative of $y(x)=ln(x)$ is $y(x)^{(2)}=-1/x^2$ and thus $y(x)$ is concave down, which satisfies $\frac{y(a)+y(b)}{2}<y(\frac{a+b}{2})$. From the analysis of \textbf{Step 1}, the message passing process improves (reduces) the probability of configurations with low (high) energy. Therefore, message passing makes $\sum_{k=1}^{N_s}\ln{q_{\theta}(\textbf{s}_k)}$ smaller compared to no message passing.
\end{proof}

In summary, combining the analyses of \textbf{Step 1} and \textbf{Step 2}, the message passing process makes the variational free energy $F_q$ smaller compared to no message passing.

\begin{figure}
\includegraphics[scale=0.5]{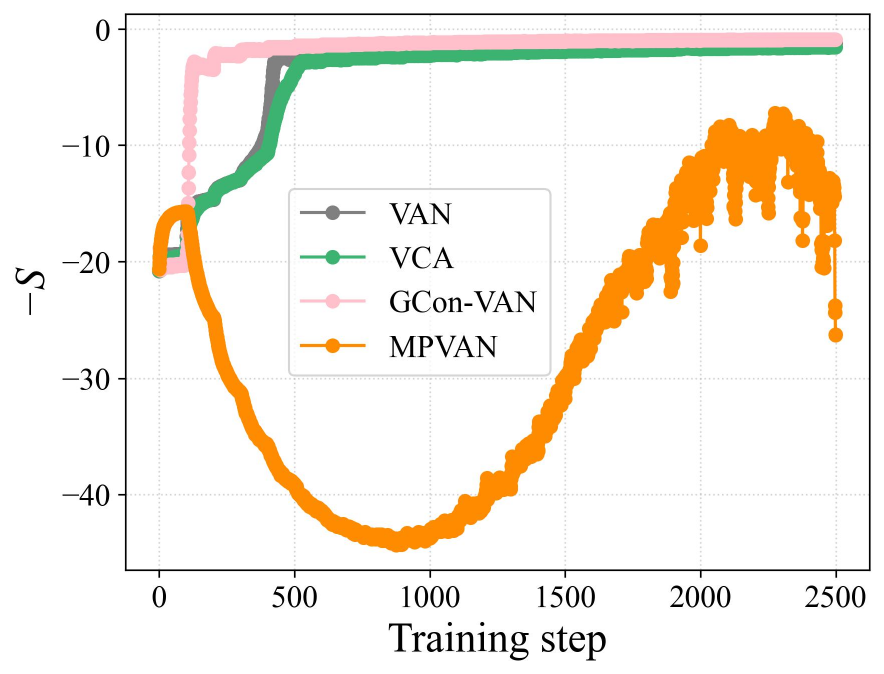}
\caption{\label{fig3} The negative entropy during training when $N_{annealing}=25$ and $N_{training}=100$, on the WPE with $N=30, \alpha=0.2$ and averaging on 10 runs.}
\end{figure}

\begin{figure*}
\includegraphics[scale=0.55]{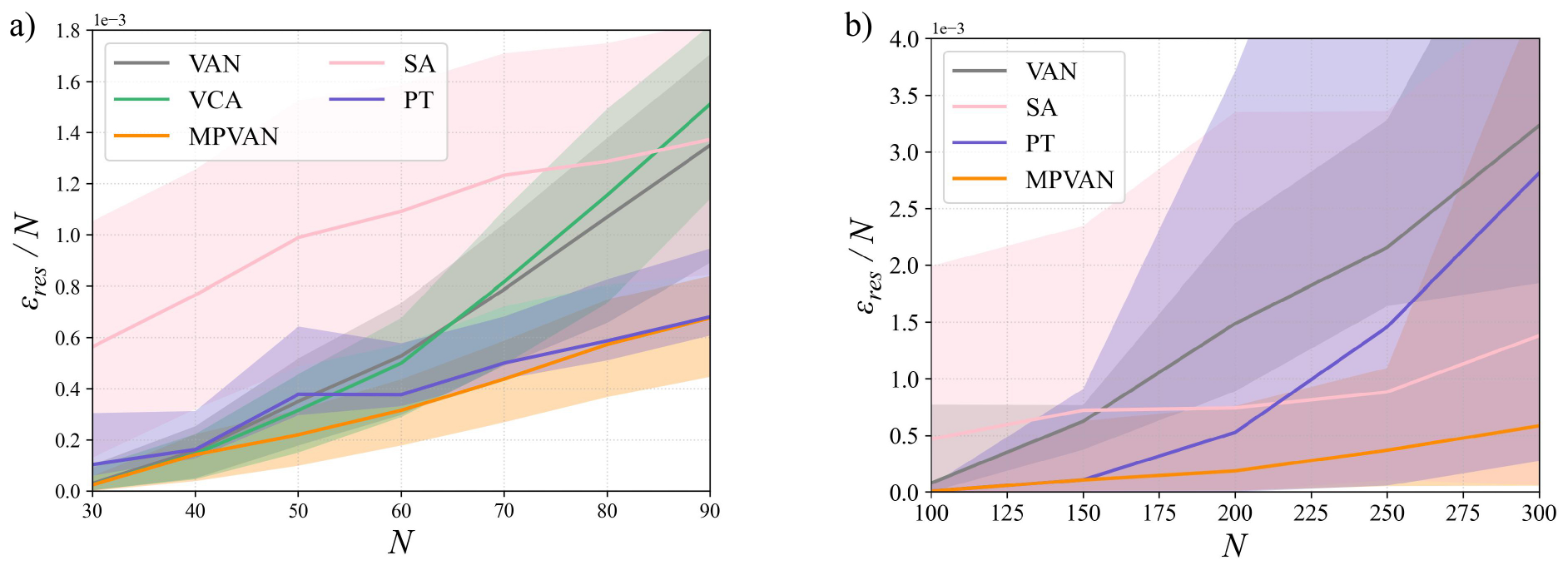}
\caption{\label{fig4} The residual energy per site of MPVAN with benchmark methods varies with system size $N$. (a) On the WPE, the $\epsilon_{res}/N$ averages on 30 instances and each for 30 runs, all instances with the system size $N$ and $\alpha=0.2$. When $N\geq 50$, the problem instances cannot be solved due to rough energy landscapes. (b) On the SK model, the residual energy per site averages on 30 instances and each for 10 runs. Since the energy of the ground state cannot be determined, we use the lowest energy across MPVAN, VAN, SA, and PT to replace it. Due to computational limitations, we exclude VCA from comparison when $N>100$ as its speed is about $N/10$ times slower than MPVAN when trained under the same hyperparameters. More details regarding computational speed of MPVAN and other methods can be found in Appendix~\ref{appen5}.}
\end{figure*}

\begin{figure}
\includegraphics[scale=0.55]{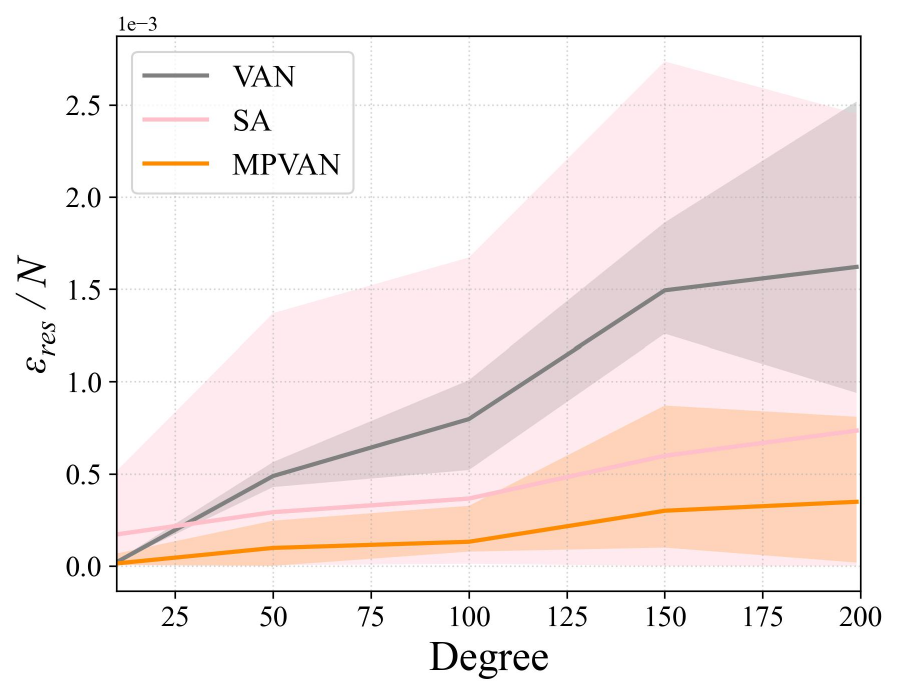}
\caption{\label{fig5} On the variants of the SK model, the residual energy per site of MPVAN with benchmark methods varies with average degree of each node in graphs with $N=200$ averaging on 30 randomly generated instances and each for 10 runs.}
\end{figure}

\section{NUMERICAL EXPERIMENTS}
\label{sec3}

As described in Sec.~\ref{sec2}, MPVAN also includes existing neural network approaches as special cases, with different message passing mechanisms. We conduct experiments in Appendix.~\ref{appen1} to compare MPVAN with multiple message passing mechanisms, where MPVAN with Hamiltonians MP performs best. Therefore below we consider comparing the MPVAN with Hamiltonians MP to existing methods, and if not specified, in the following MPVAN refers to for MPVAN with Hamiltonians MP. We experiment on two classes of fully connected and intractable models, the WPE and the SK models.

The Wishart Planted Ensemble (WPE) \cite{WPE2020} is a class of fully connected Ising models with an adjustable difficulty parameter $\alpha$ and planted solutions, which make it an ideal benchmark problem for evaluating heuristic algorithms. The Hamiltonian of the WPE is defined as
\begin{equation}
    \label{eq23}
    H = -\frac{1}{2}\sum_{i\neq j}{J_{ij}s_{i}s_{j}},
\end{equation}
where the coupling matrix $\{J_{ij}\}$ is a symmetric matrix that is subject to copula distribution. More details about the WPE can be found in Ref.~\cite{WPE2020}.

First, we discuss an important issue, mode collapse, which occurs when the target probability distribution has multiple peaks but networks only learn a few of them. It severely affects the sampling ability of autoregressive neural networks \cite{Ciarella_2023}. Entropy is commonly used in physics to measure the degree of chaos in a system. The greater the entropy, the more chaotic the system. Therefore, we can use the magnitude of entropy to reflect whether mode collapse occurs for the variational distribution. In our experiments, we investigate how the negative entropy, a part of variational free energy $F_q$ in Eq.~(\ref{eq12}), changes during training, which is defined as 
\begin{equation}
    \label{eq24}
        -S=\sum_{\textbf{s}}q_{\theta}(\textbf{s})\ln{q_{\theta}(\textbf{s})},
\end{equation}
where $S$ is entropy. Equivalently, the smaller the negative entropy, the more chaotic the system.

As shown in Fig.~\ref{fig3}, the change of $-S$ from MPVAN shows an increasing-decreasing-increasing trend, while $-S$ from other methods is monotonely increasing and quickly convergent. When $training\ step \geq\ 500$, mode collapse occurs for other methods, while until $training\ step \geq\ 2000$, it occurs for MPVAN. Therefore, MPVAN delays the emergence of mode collapse greatly. In addition, we also consider the impact of learning rates on the emergence of mode collapse in Appendix.~\ref{appen6}, where mode collapse always occurs later in MPVAN than in VAN.

\begin{figure*}
\includegraphics[scale=0.55]{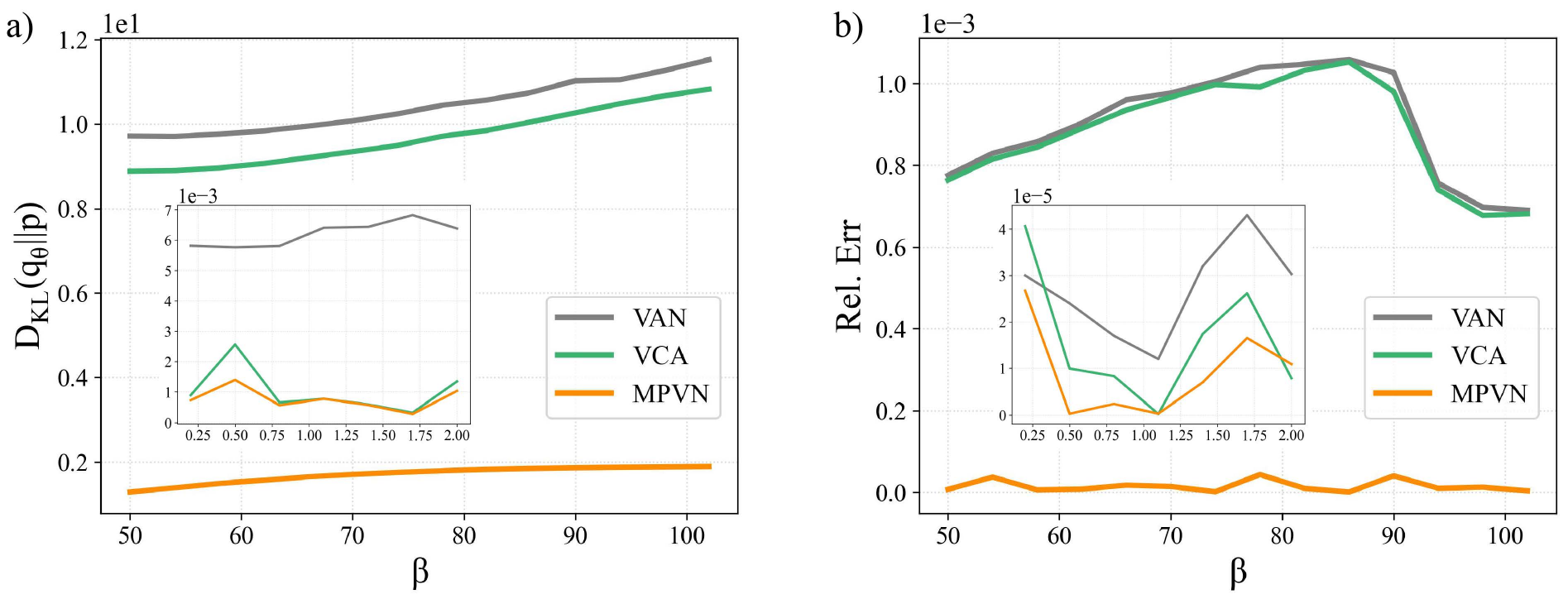}
\caption{\label{fig6} The KL divergence and relative errors vary with $\beta$ on the WPE with $N=20$ and $\alpha=0.05$. (a) The KL divergence $D_{KL}(q_\theta||p)$ between the variational distribution $q_\theta$ and the Boltzmann distribution $p$. The inset shows the $D_{KL}(q_\theta||p)$ when $\beta$ is small. (b) The relative errors of $F_q$ relative to exact free energy $F$. The inset shows the relative errors when $\beta$ is small.}
\end{figure*}

Next, we benchmark MPVAN with existing methods when calculating the upper bound to the energy of the ground state, i.e., finding a configuration $\textbf{s}$ to minimize the Hamiltonian in Eq.~(\ref{eq23}), which is also the core concern of combinatorial optimization. To facilitate a quantitative comparison, we employ the concept of residual energy, defined as
\begin{equation}
    \label{eq25}    \epsilon_{res}=\left[\left\langle{H_{min}}\right \rangle_{ava}-E_{G}\right]_{ava},
\end{equation}
where $H_{min}$ represents the minimum value of the Hamiltonians corresponding to $10^6$ configurations sampled directly from the network after training, and $E_{G}$ is the energy of the ground state. The $\left\langle{\dots}\right \rangle_{ava}$ denotes the average over 30 independent runs for one same instance, and $\left[\dots\right]_{ava}$ means averaging on 30 instances. In the figures representing the residual energy, such as Fig.~\ref{fig4} and Fig.~\ref{fig5}, the solid line in the figures indicates the average value of the residual energy, and the color band indicates the area between the maximum and minimum values of the residual energy of 30 independent runs for the corresponding algorithm. Both solid lines and color bands are obtained by averaging 30 instances.

As depicted in Fig.~\ref{fig4}(a), the residual energy obtained by our method consistently is lower than that of VAN, VCA, and SA across all system sizes when averaging on 30 instances and each for 30 runs. Even compared with state-of-the-art parallel tempering (PT) \cite{PT1986, PT2005}, MPVAN also exhibits slightly better performance in terms of average residual energy, but significantly better in terms of minimum residual energy. For WPE instances with system size $N=50$, MPVAN can still find the ground state with non-ignorable probability, but other methods cannot. As the system size increases, MPVAN has greater advantages over existing methods in giving a lower residual energy. Since the number of interactions between spin variables is $|\{J_{ij} \neq 0\}| = N^2$ for fully connected systems, larger systems have much more interactions between spin variables. The advantages indicate that using MPVAN to consider these interactions performs better in rougher energy landscapes. We always average 30 instances to reflect the general properties of models, and the differences between instances can be seen in Appendix.~\ref{appen1}. Also, each method runs independently 30 times on the same instances to weaken the influence of occasionality in the heuristics training, which can be found in Appendix.~\ref{appen3}.

Since these methods are trained in different ways, we keep the total number of training samples used in training and final sampling after training the same for all methods. The training samples of MPVAN consist of two parts, training samples and final sampling samples. Assuming annealing number $N_{annealing}$, training $N_{training}$ steps at each temperature, sampling $N_{trasam}$ samples each step in training, and final sampling $N_{finsam}$ samples, the total number of training samples for MPVAN is
\begin{equation}
    \label{eq26} N_{MPsam}=N_{annealing} \times N_{training} \times N_{trasam} + N_{finsam}.
\end{equation}
The training samples for VAN and VCA are the same as those for MPVAN, with some fine-tuning of parameters. Assuming the number of inner loops of SA is $N_{inlop}$ at each temperature, and the total number of training samples for SA is
\begin{equation}
    \label{eq27} N_{SAsam}=N_{annealing} \times N_{inlop}.
\end{equation}
Assuming the number of chains of PT is $N_{chain}$ and the number of random flips at each chain is $N_{rf}$, then the number of training samples for PT of 1 replica is
\begin{equation}
    \label{eq28} N_{PT1sam}=N_{chain} \times N_{rf}.
\end{equation}
Therefore, corresponding to 1 run of MPVAN, SA runs $\lceil N_{MPsam}/N_{SAsam} \rceil$ times independently and PT runs $\lceil N_{MPsam}/N_{PT1sam} \rceil$ replicas, and then they output their respective best results to benchmark MPVAN. It is important to note that the performance of each algorithm depends not only on the number of training samples used in training, but also on the training parameters, such as the annealing schedule and initial and final temperatures. We have fine-tuned the training parameters of each algorithm to maximize its best performance.

We also experiment on the Sherrington-Kirkpatrick (SK) model \cite{Sherrington1975SK}, which is one of the most famous fully connected spin glass models and has significant relevance in combinatorial optimization and machine learning applications \cite{Panchenko2012TheSM, Panchenko2013}. Its Hamiltonian is also in the form of Eq.~(\ref{eq23}), where $\{J_{ij}\}$ are from a Gaussian distribution with the variance $1/N$ and a symmetric matrix.

As illustrated in Fig.~\ref{fig4}(b), our method provides significantly lower residual energy than VAN, SA, and PT across all system sizes when averaging on 30 instances and each for 10 runs. Notably, as the system size increases, the advantages of our method over VAN, SA, and PT become even more pronounced, which is consistent with the trend observed in WPE experiments. We also show the approximating ratio results on the WPE and the SK model in Appendix.~\ref{appen4}, which is of concern to researchers of combinatorial optimization problems.

Inspired by the correlations between node's degree and difficulty in finding the ground state in maximum independent set problems \cite{Barbier2013, CojaOghlan2015}, in addition to experimenting on fully connected models, we also consider experiments on models with different connectivity, i.e., degrees of nodes in graphs. Since the SK model has been widely studied, it may be interesting to design new models based on the SK model. We generate models with different connectivity by deleting some couplings of the SK model and name them variants of the SK model. At each degree, we always randomly generate 30 instances, and the couplings $\{J_{ij}\}$ are from a Gaussian distribution with a variance of $1/N$ and a symmetric matrix. As shown in Fig.~\ref{fig5}, our method gives a lower residual energy than VAN and SA at all degrees. Moreover, as degree increases, the advantages of our method over VAN and SA become even more pronounced. The denser the graph, the larger the number of interactions between spin variables. The advantages show that our method, which takes into account these interactions, is able to give lower upper bounds to the free energy.

In the following, we focus on estimating the Boltzmann distribution and calculating the free energy as annealing. As a proof of concept, we use the WPE with a small system size of $N=20$, where $2^{N}$ configurations can be enumerated and the exact Boltzmann distribution and exact free energy $F$ can be calculated within an acceptable time. We set $\alpha=0.05$ and thus it is difficult to find the ground state due to strong low-energy degeneracy.

As shown in Fig.~\ref{fig6}, when the temperature is high, i.e., when $\beta$ is small, the $D_{KL}(q_\theta||p)$ and the relative errors of $F_q$ relative to exact free energy $F$ from MPVAN, VAN, and VCA are particularly small. Therefore, it is necessary to lower the temperature to distinguish them. As the temperature decreases, the probability of the configurations with low (high) energy in the Boltzmann distribution increases (decreases), thus making it more difficult for neural networks to estimate the Boltzmann distribution. However, we find that the $D_{KL}(q_\theta||p)$ obtained by our method is much smaller than that of VAN and VCA, which indicates that the variational distribution $q_\theta(\textbf{s})$ parameterized by our method is closer to the Boltzmann distribution. Similarly, our method gives a better estimation of free energy than VAN and VCA. These results illustrate that our method takes into account the interactions between spin variables through message passing and is more accurate in estimating the relevant physical quantities.

\section{CONCLUSION AND DISCUSSIONS}
\label{sec4}

In summary, we propose a variational autoregressive architecture with a message passing mechanism, which can effectively utilize the interactions between spin variables, to solve intractable Ising models. Numerical experiments show that our method outperforms existing methods including VAN, VCA, SA and even PT, in solving two prototypical Ising spin Hamiltonians, WPE and the SK model, including more accurately estimating the Boltzmann distribution and calculating lower free energy at low temperatures. The advantages also come from the great mitigation of mode collapse during the training process of deep neural networks. Moreover, as the system size increases or the connectivity of graphs increases, MPVAN has greater advantages over existing methods in giving a lower upper bound to the energy of the ground state.

Formally, MPVAN and GNN are similar. We notice that some researchers have recently argued that graph neural networks do not perform as well as classical heuristic algorithms on combinatorial optimization problems \cite{Boettcher2022, Angelini2022} for the method in Ref.~\cite{Schuetz2021PIGNN}. Our work, however, draws the opposite conclusion. We argue that when the problems are in rough energy landscapes and hard to find the ground state (e.g., WPE), our method performs significantly better than traditional heuristic algorithms such as SA and even slightly better than state-of-the-art PT. Our method is based on variational autoregressive networks, which are difficult to train due to slow speed when the systems are particularly large, and thus MPVAN is not easy to expand to very large problems. At the very least, we argue that MPVAN (or GNN) excels particularly well in certain intractable Ising models with rough energy landscapes, providing an alternative to traditional heuristics.

%The PyTorch implementation of our model and algorithm can be found at Ref.~\cite{MPVANgithub}.

% Acknowledgments
\begin{acknowledgments}
We thank Pan Zhang for helpful discussions on the manuscript. This work is supported by Project 61972413 (Z.M.) of the National Natural Science Foundation of China.\\
\end{acknowledgments}

\noindent\textbf{Author contributions}:

M.G. conceived and designed the project. Z.M. and M.G. managed the project. Q.M. and H.Z. performed all the numerical calculations and analyzed the results. G.M., Q.M., Z.M., and J.X. interpreted the results. Q.M. and G.M. wrote the paper.\\

\noindent\textbf{Data availability}:

The data that support the findings of this study are available from the corresponding author upon reasonable request.\\

\noindent\textbf{Code availability}:

The code that supports the findings of this study is available from the corresponding author upon reasonable request.\\

\noindent\textbf{Competing interests}:

The authors declare no competing interests.\\

\bibliography{aps-MPVAN}% Produces the bibliography via BibTeX.

%apsrev4-2.bst 2019-01-14 (MD) hand-edited version of apsrev4-1.bst
%Control: key (0)
%Control: author (8) initials jnrlst
%Control: editor formatted (1) identically to author
%Control: production of article title (0) allowed
%Control: page (0) single
%Control: year (1) truncated
%Control: production of eprint (0) enabled
\providecommand{\noopsort}[1]{}\providecommand{\singleletter}[1]{#1}%
\begin{thebibliography}{52}%
\makeatletter
\providecommand \@ifxundefined [1]{%
 \@ifx{#1\undefined}
}%
\providecommand \@ifnum [1]{%
 \ifnum #1\expandafter \@firstoftwo
 \else \expandafter \@secondoftwo
 \fi
}%
\providecommand \@ifx [1]{%
 \ifx #1\expandafter \@firstoftwo
 \else \expandafter \@secondoftwo
 \fi
}%
\providecommand \natexlab [1]{#1}%
\providecommand \enquote  [1]{``#1''}%
\providecommand \bibnamefont  [1]{#1}%
\providecommand \bibfnamefont [1]{#1}%
\providecommand \citenamefont [1]{#1}%
\providecommand \href@noop [0]{\@secondoftwo}%
\providecommand \href [0]{\begingroup \@sanitize@url \@href}%
\providecommand \@href[1]{\@@startlink{#1}\@@href}%
\providecommand \@@href[1]{\endgroup#1\@@endlink}%
\providecommand \@sanitize@url [0]{\catcode `\\12\catcode `\$12\catcode `\&12\catcode `\#12\catcode `\^12\catcode `\_12\catcode `\%12\relax}%
\providecommand \@@startlink[1]{}%
\providecommand \@@endlink[0]{}%
\providecommand \url  [0]{\begingroup\@sanitize@url \@url }%
\providecommand \@url [1]{\endgroup\@href {#1}{\urlprefix }}%
\providecommand \urlprefix  [0]{URL }%
\providecommand \Eprint [0]{\href }%
\providecommand \doibase [0]{https://doi.org/}%
\providecommand \selectlanguage [0]{\@gobble}%
\providecommand \bibinfo  [0]{\@secondoftwo}%
\providecommand \bibfield  [0]{\@secondoftwo}%
\providecommand \translation [1]{[#1]}%
\providecommand \BibitemOpen [0]{}%
\providecommand \bibitemStop [0]{}%
\providecommand \bibitemNoStop [0]{.\EOS\space}%
\providecommand \EOS [0]{\spacefactor3000\relax}%
\providecommand \BibitemShut  [1]{\csname bibitem#1\endcsname}%
\let\auto@bib@innerbib\@empty
%</preamble>
\bibitem [{\citenamefont {Carleo}\ \emph {et~al.}(2019)\citenamefont {Carleo}, \citenamefont {Cirac}, \citenamefont {Cranmer}, \citenamefont {Daudet}, \citenamefont {Schuld}, \citenamefont {Tishby}, \citenamefont {Vogt-Maranto},\ and\ \citenamefont {Zdeborov\'{a}}}]{Carleo2019}%
  \BibitemOpen
  \bibfield  {author} {\bibinfo {author} {\bibfnamefont {G.}~\bibnamefont {Carleo}}, \bibinfo {author} {\bibfnamefont {I.}~\bibnamefont {Cirac}}, \bibinfo {author} {\bibfnamefont {K.}~\bibnamefont {Cranmer}}, \bibinfo {author} {\bibfnamefont {L.}~\bibnamefont {Daudet}}, \bibinfo {author} {\bibfnamefont {M.}~\bibnamefont {Schuld}}, \bibinfo {author} {\bibfnamefont {N.}~\bibnamefont {Tishby}}, \bibinfo {author} {\bibfnamefont {L.}~\bibnamefont {Vogt-Maranto}},\ and\ \bibinfo {author} {\bibfnamefont {L.}~\bibnamefont {Zdeborov\'{a}}},\ }\bibfield  {title} {\bibinfo {title} {Machine learning and the physical sciences},\ }\href {https://doi.org/10.1103/RevModPhys.91.045002} {\bibfield  {journal} {\bibinfo  {journal} {Rev. Mod. Phys.}\ }\textbf {\bibinfo {volume} {91}},\ \bibinfo {pages} {045002} (\bibinfo {year} {2019})}\BibitemShut {NoStop}%
\bibitem [{\citenamefont {Tanaka}\ \emph {et~al.}(2023)\citenamefont {Tanaka}, \citenamefont {Tomiya},\ and\ \citenamefont {Hashimoto}}]{Akinori2023}%
  \BibitemOpen
  \bibfield  {author} {\bibinfo {author} {\bibfnamefont {A.}~\bibnamefont {Tanaka}}, \bibinfo {author} {\bibfnamefont {A.}~\bibnamefont {Tomiya}},\ and\ \bibinfo {author} {\bibfnamefont {K.}~\bibnamefont {Hashimoto}},\ }\href {https://doi.org/https://doi.org/10.1007/978-981-33-6108-9} {\emph {\bibinfo {title} {Deep Learning and Physics}}}\ (\bibinfo  {publisher} {Springer Singapore},\ \bibinfo {year} {2023})\BibitemShut {NoStop}%
\bibitem [{\citenamefont {Wu}\ \emph {et~al.}(2019)\citenamefont {Wu}, \citenamefont {Wang},\ and\ \citenamefont {Zhang}}]{Wu2019VAN}%
  \BibitemOpen
  \bibfield  {author} {\bibinfo {author} {\bibfnamefont {D.}~\bibnamefont {Wu}}, \bibinfo {author} {\bibfnamefont {L.}~\bibnamefont {Wang}},\ and\ \bibinfo {author} {\bibfnamefont {P.}~\bibnamefont {Zhang}},\ }\bibfield  {title} {\bibinfo {title} {Solving statistical mechanics using variational autoregressive networks},\ }\href@noop {} {\bibfield  {journal} {\bibinfo  {journal} {Phys. Rev. Lett.}\ }\textbf {\bibinfo {volume} {122}},\ \bibinfo {pages} {080602} (\bibinfo {year} {2019})}\BibitemShut {NoStop}%
\bibitem [{\citenamefont {Hibat-Allah}\ \emph {et~al.}(2021)\citenamefont {Hibat-Allah}, \citenamefont {Inack}, \citenamefont {Wiersema}, \citenamefont {Melko},\ and\ \citenamefont {Carrasquilla}}]{Hibat2021vca}%
  \BibitemOpen
  \bibfield  {author} {\bibinfo {author} {\bibfnamefont {M.}~\bibnamefont {Hibat-Allah}}, \bibinfo {author} {\bibfnamefont {E.~M.}\ \bibnamefont {Inack}}, \bibinfo {author} {\bibfnamefont {R.}~\bibnamefont {Wiersema}}, \bibinfo {author} {\bibfnamefont {R.~G.}\ \bibnamefont {Melko}},\ and\ \bibinfo {author} {\bibfnamefont {J.~F.}\ \bibnamefont {Carrasquilla}},\ }\bibfield  {title} {\bibinfo {title} {Variational neural annealing},\ }\href {https://doi.org/10.1038/s42256-021-00401-3} {\bibfield  {journal} {\bibinfo  {journal} {Nature Machine Intelligence}\ }\textbf {\bibinfo {volume} {3}},\ \bibinfo {pages} {952 } (\bibinfo {year} {2021})}\BibitemShut {NoStop}%
\bibitem [{\citenamefont {McNaughton}\ \emph {et~al.}(2020)\citenamefont {McNaughton}, \citenamefont {Milo\ifmmode \check{s}\else \v{s}\fi{}evi\ifmmode~\acute{c}\else \'{c}\fi{}}, \citenamefont {Perali},\ and\ \citenamefont {Pilati}}]{McNaughton2020}%
  \BibitemOpen
  \bibfield  {author} {\bibinfo {author} {\bibfnamefont {B.}~\bibnamefont {McNaughton}}, \bibinfo {author} {\bibfnamefont {M.~V.}\ \bibnamefont {Milo\ifmmode \check{s}\else \v{s}\fi{}evi\ifmmode~\acute{c}\else \'{c}\fi{}}}, \bibinfo {author} {\bibfnamefont {A.}~\bibnamefont {Perali}},\ and\ \bibinfo {author} {\bibfnamefont {S.}~\bibnamefont {Pilati}},\ }\bibfield  {title} {\bibinfo {title} {Boosting monte carlo simulations of spin glasses using autoregressive neural networks},\ }\href {https://doi.org/10.1103/PhysRevE.101.053312} {\bibfield  {journal} {\bibinfo  {journal} {Phys. Rev. E}\ }\textbf {\bibinfo {volume} {101}},\ \bibinfo {pages} {053312} (\bibinfo {year} {2020})}\BibitemShut {NoStop}%
\bibitem [{\citenamefont {Gabrié}\ \emph {et~al.}(2022)\citenamefont {Gabrié}, \citenamefont {Rotskoff},\ and\ \citenamefont {Vanden-Eijnden}}]{Marylou2022}%
  \BibitemOpen
  \bibfield  {author} {\bibinfo {author} {\bibfnamefont {M.}~\bibnamefont {Gabrié}}, \bibinfo {author} {\bibfnamefont {G.~M.}\ \bibnamefont {Rotskoff}},\ and\ \bibinfo {author} {\bibfnamefont {E.}~\bibnamefont {Vanden-Eijnden}},\ }\bibfield  {title} {\bibinfo {title} {Adaptive monte carlo augmented with normalizing flows},\ }\href {https://doi.org/10.1073/pnas.2109420119} {\bibfield  {journal} {\bibinfo  {journal} {Proceedings of the National Academy of Sciences}\ }\textbf {\bibinfo {volume} {119}},\ \bibinfo {pages} {e2109420119} (\bibinfo {year} {2022})}\BibitemShut {NoStop}%
\bibitem [{\citenamefont {Wu}\ \emph {et~al.}(2021)\citenamefont {Wu}, \citenamefont {Rossi},\ and\ \citenamefont {Carleo}}]{Wu2021}%
  \BibitemOpen
  \bibfield  {author} {\bibinfo {author} {\bibfnamefont {D.}~\bibnamefont {Wu}}, \bibinfo {author} {\bibfnamefont {R.}~\bibnamefont {Rossi}},\ and\ \bibinfo {author} {\bibfnamefont {G.}~\bibnamefont {Carleo}},\ }\bibfield  {title} {\bibinfo {title} {Unbiased monte carlo cluster updates with autoregressive neural networks},\ }\href {https://doi.org/10.1103/PhysRevResearch.3.L042024} {\bibfield  {journal} {\bibinfo  {journal} {Phys. Rev. Res.}\ }\textbf {\bibinfo {volume} {3}},\ \bibinfo {pages} {L042024} (\bibinfo {year} {2021})}\BibitemShut {NoStop}%
\bibitem [{\citenamefont {Pan}\ \emph {et~al.}(2021)\citenamefont {Pan}, \citenamefont {Zhou}, \citenamefont {Zhou},\ and\ \citenamefont {Zhang}}]{Panfeng2021}%
  \BibitemOpen
  \bibfield  {author} {\bibinfo {author} {\bibfnamefont {F.}~\bibnamefont {Pan}}, \bibinfo {author} {\bibfnamefont {P.}~\bibnamefont {Zhou}}, \bibinfo {author} {\bibfnamefont {H.-J.}\ \bibnamefont {Zhou}},\ and\ \bibinfo {author} {\bibfnamefont {P.}~\bibnamefont {Zhang}},\ }\bibfield  {title} {\bibinfo {title} {Solving statistical mechanics on sparse graphs with feedback-set variational autoregressive networks},\ }\href {https://doi.org/10.1103/PhysRevE.103.012103} {\bibfield  {journal} {\bibinfo  {journal} {Phys. Rev. E}\ }\textbf {\bibinfo {volume} {103}},\ \bibinfo {pages} {012103} (\bibinfo {year} {2021})}\BibitemShut {NoStop}%
\bibitem [{\citenamefont {van~den Oord}\ \emph {et~al.}(2016)\citenamefont {van~den Oord}, \citenamefont {Kalchbrenner},\ and\ \citenamefont {Kavukcuoglu}}]{Pixel2016}%
  \BibitemOpen
  \bibfield  {author} {\bibinfo {author} {\bibfnamefont {A.}~\bibnamefont {van~den Oord}}, \bibinfo {author} {\bibfnamefont {N.}~\bibnamefont {Kalchbrenner}},\ and\ \bibinfo {author} {\bibfnamefont {K.}~\bibnamefont {Kavukcuoglu}},\ }\bibfield  {title} {\bibinfo {title} {Pixel recurrent neural networks},\ }in\ \href@noop {} {\emph {\bibinfo {booktitle} {Proceedings of The 33rd International Conference on Machine Learning}}},\ Vol.~\bibinfo {volume} {48}\ (\bibinfo {year} {2016})\ pp.\ \bibinfo {pages} {1747--1756}\BibitemShut {NoStop}%
\bibitem [{\citenamefont {Hibat-Allah}\ \emph {et~al.}(2020)\citenamefont {Hibat-Allah}, \citenamefont {Ganahl}, \citenamefont {Hayward}, \citenamefont {Melko},\ and\ \citenamefont {Carrasquilla}}]{Hibat2020RNN}%
  \BibitemOpen
  \bibfield  {author} {\bibinfo {author} {\bibfnamefont {M.}~\bibnamefont {Hibat-Allah}}, \bibinfo {author} {\bibfnamefont {M.}~\bibnamefont {Ganahl}}, \bibinfo {author} {\bibfnamefont {L.~E.}\ \bibnamefont {Hayward}}, \bibinfo {author} {\bibfnamefont {R.~G.}\ \bibnamefont {Melko}},\ and\ \bibinfo {author} {\bibfnamefont {J.}~\bibnamefont {Carrasquilla}},\ }\bibfield  {title} {\bibinfo {title} {Recurrent neural network wave functions},\ }\href@noop {} {\bibfield  {journal} {\bibinfo  {journal} {Phys. Rev. Research}\ }\textbf {\bibinfo {volume} {2}},\ \bibinfo {pages} {023358} (\bibinfo {year} {2020})}\BibitemShut {NoStop}%
\bibitem [{\citenamefont {Dai}\ \emph {et~al.}(2017)\citenamefont {Dai}, \citenamefont {Khalil}, \citenamefont {Zhang}, \citenamefont {Dilkina},\ and\ \citenamefont {Song}}]{Dai2017LC}%
  \BibitemOpen
  \bibfield  {author} {\bibinfo {author} {\bibfnamefont {H.}~\bibnamefont {Dai}}, \bibinfo {author} {\bibfnamefont {E.~B.}\ \bibnamefont {Khalil}}, \bibinfo {author} {\bibfnamefont {Y.}~\bibnamefont {Zhang}}, \bibinfo {author} {\bibfnamefont {B.}~\bibnamefont {Dilkina}},\ and\ \bibinfo {author} {\bibfnamefont {L.}~\bibnamefont {Song}},\ }\bibfield  {title} {\bibinfo {title} {Learning combinatorial optimization algorithms over graphs},\ }in\ \href@noop {} {\emph {\bibinfo {booktitle} {Proceedings of the 31st International Conference on Neural Information Processing Systems}}},\ \bibinfo {series and number} {NIPS'17}\ (\bibinfo {year} {2017})\ p.\ \bibinfo {pages} {6351–6361}\BibitemShut {NoStop}%
\bibitem [{\citenamefont {Li}\ \emph {et~al.}(2018)\citenamefont {Li}, \citenamefont {Chen},\ and\ \citenamefont {Koltun}}]{Li2018COG}%
  \BibitemOpen
  \bibfield  {author} {\bibinfo {author} {\bibfnamefont {Z.}~\bibnamefont {Li}}, \bibinfo {author} {\bibfnamefont {Q.}~\bibnamefont {Chen}},\ and\ \bibinfo {author} {\bibfnamefont {V.}~\bibnamefont {Koltun}},\ }\bibfield  {title} {\bibinfo {title} {Combinatorial optimization with graph convolutional networks and guided tree search},\ }in\ \href@noop {} {\emph {\bibinfo {booktitle} {Neural Information Processing Systems}}}\ (\bibinfo {year} {2018})\BibitemShut {NoStop}%
\bibitem [{\citenamefont {Gasse}\ \emph {et~al.}(2019)\citenamefont {Gasse}, \citenamefont {Chetelat}, \citenamefont {Ferroni}, \citenamefont {Charlin},\ and\ \citenamefont {Lodi}}]{Gasse2019}%
  \BibitemOpen
  \bibfield  {author} {\bibinfo {author} {\bibfnamefont {M.}~\bibnamefont {Gasse}}, \bibinfo {author} {\bibfnamefont {D.}~\bibnamefont {Chetelat}}, \bibinfo {author} {\bibfnamefont {N.}~\bibnamefont {Ferroni}}, \bibinfo {author} {\bibfnamefont {L.}~\bibnamefont {Charlin}},\ and\ \bibinfo {author} {\bibfnamefont {A.}~\bibnamefont {Lodi}},\ }\bibfield  {title} {\bibinfo {title} {Exact combinatorial optimization with graph convolutional neural networks},\ }in\ \href@noop {} {\emph {\bibinfo {booktitle} {Neural Information Processing Systems}}}\ (\bibinfo {year} {2019})\BibitemShut {NoStop}%
\bibitem [{\citenamefont {Joshi}\ \emph {et~al.}(2019)\citenamefont {Joshi}, \citenamefont {Laurent},\ and\ \citenamefont {Bresson}}]{Joshi2019AnEG}%
  \BibitemOpen
  \bibfield  {author} {\bibinfo {author} {\bibfnamefont {C.~K.}\ \bibnamefont {Joshi}}, \bibinfo {author} {\bibfnamefont {T.}~\bibnamefont {Laurent}},\ and\ \bibinfo {author} {\bibfnamefont {X.}~\bibnamefont {Bresson}},\ }\bibfield  {title} {\bibinfo {title} {An efficient graph convolutional network technique for the travelling salesman problem},\ }\href@noop {} {\bibfield  {journal} {\bibinfo  {journal} {ArXiv}\ }\textbf {\bibinfo {volume} {abs/1906.01227}} (\bibinfo {year} {2019})}\BibitemShut {NoStop}%
\bibitem [{\citenamefont {Speck}\ \emph {et~al.}(2020)\citenamefont {Speck}, \citenamefont {Biedenkapp}, \citenamefont {Hutter}, \citenamefont {Mattm{\"u}ller},\ and\ \citenamefont {Lindauer}}]{Speck2020LHS}%
  \BibitemOpen
  \bibfield  {author} {\bibinfo {author} {\bibfnamefont {D.}~\bibnamefont {Speck}}, \bibinfo {author} {\bibfnamefont {A.}~\bibnamefont {Biedenkapp}}, \bibinfo {author} {\bibfnamefont {F.}~\bibnamefont {Hutter}}, \bibinfo {author} {\bibfnamefont {R.}~\bibnamefont {Mattm{\"u}ller}},\ and\ \bibinfo {author} {\bibfnamefont {M.~T.}\ \bibnamefont {Lindauer}},\ }\bibfield  {title} {\bibinfo {title} {Learning heuristic selection with dynamic algorithm configuration},\ }\href {https://api.semanticscholar.org/CorpusID:219686968} {\bibfield  {journal} {\bibinfo  {journal} {ArXiv}\ }\textbf {\bibinfo {volume} {abs/2006.08246}} (\bibinfo {year} {2020})}\BibitemShut {NoStop}%
\bibitem [{\citenamefont {Schuetz}\ \emph {et~al.}(2021)\citenamefont {Schuetz}, \citenamefont {Brubaker},\ and\ \citenamefont {Katzgraber}}]{Schuetz2021PIGNN}%
  \BibitemOpen
  \bibfield  {author} {\bibinfo {author} {\bibfnamefont {M.~J.~A.}\ \bibnamefont {Schuetz}}, \bibinfo {author} {\bibfnamefont {J.~K.}\ \bibnamefont {Brubaker}},\ and\ \bibinfo {author} {\bibfnamefont {H.~G.}\ \bibnamefont {Katzgraber}},\ }\bibfield  {title} {\bibinfo {title} {Combinatorial optimization with physics-inspired graph neural networks},\ }\href@noop {} {\bibfield  {journal} {\bibinfo  {journal} {Nature Machine Intelligence}\ }\textbf {\bibinfo {volume} {4}},\ \bibinfo {pages} {367} (\bibinfo {year} {2021})}\BibitemShut {NoStop}%
\bibitem [{\citenamefont {Schuetz}\ \emph {et~al.}(2022)\citenamefont {Schuetz}, \citenamefont {Brubaker}, \citenamefont {Zhu},\ and\ \citenamefont {Katzgraber}}]{Schuetz2022}%
  \BibitemOpen
  \bibfield  {author} {\bibinfo {author} {\bibfnamefont {M.~J.~A.}\ \bibnamefont {Schuetz}}, \bibinfo {author} {\bibfnamefont {J.~K.}\ \bibnamefont {Brubaker}}, \bibinfo {author} {\bibfnamefont {Z.}~\bibnamefont {Zhu}},\ and\ \bibinfo {author} {\bibfnamefont {H.~G.}\ \bibnamefont {Katzgraber}},\ }\bibfield  {title} {\bibinfo {title} {Graph coloring with physics-inspired graph neural networks},\ }\href {https://doi.org/10.1103/PhysRevResearch.4.043131} {\bibfield  {journal} {\bibinfo  {journal} {Phys. Rev. Res.}\ }\textbf {\bibinfo {volume} {4}},\ \bibinfo {pages} {043131} (\bibinfo {year} {2022})}\BibitemShut {NoStop}%
\bibitem [{\citenamefont {Larochelle}\ and\ \citenamefont {Murray}(2011)}]{Larochelle2011}%
  \BibitemOpen
  \bibfield  {author} {\bibinfo {author} {\bibfnamefont {H.}~\bibnamefont {Larochelle}}\ and\ \bibinfo {author} {\bibfnamefont {I.}~\bibnamefont {Murray}},\ }\bibfield  {title} {\bibinfo {title} {The neural autoregressive distribution estimator.},\ }\href@noop {} {\bibfield  {journal} {\bibinfo  {journal} {Journal of Machine Learning Research - Proceedings Track}\ }\textbf {\bibinfo {volume} {15}},\ \bibinfo {pages} {29} (\bibinfo {year} {2011})}\BibitemShut {NoStop}%
\bibitem [{\citenamefont {Gregor}\ \emph {et~al.}(2014)\citenamefont {Gregor}, \citenamefont {Danihelka}, \citenamefont {Mnih}, \citenamefont {Blundell},\ and\ \citenamefont {Wierstra}}]{Gregor2014}%
  \BibitemOpen
  \bibfield  {author} {\bibinfo {author} {\bibfnamefont {K.}~\bibnamefont {Gregor}}, \bibinfo {author} {\bibfnamefont {I.}~\bibnamefont {Danihelka}}, \bibinfo {author} {\bibfnamefont {A.}~\bibnamefont {Mnih}}, \bibinfo {author} {\bibfnamefont {C.}~\bibnamefont {Blundell}},\ and\ \bibinfo {author} {\bibfnamefont {D.}~\bibnamefont {Wierstra}},\ }\bibfield  {title} {\bibinfo {title} {Deep autoregressive networks},\ }in\ \href {https://proceedings.mlr.press/v32/gregor14.html} {\emph {\bibinfo {booktitle} {Proceedings of the 31st International Conference on Machine Learning}}},\ \bibinfo {series} {Proceedings of Machine Learning Research}, Vol.~\bibinfo {volume} {32}\ (\bibinfo  {publisher} {PMLR},\ \bibinfo {year} {2014})\ pp.\ \bibinfo {pages} {1242--1250}\BibitemShut {NoStop}%
\bibitem [{\citenamefont {Germain}\ \emph {et~al.}(2015)\citenamefont {Germain}, \citenamefont {Gregor}, \citenamefont {Murray},\ and\ \citenamefont {Larochelle}}]{Germain2015}%
  \BibitemOpen
  \bibfield  {author} {\bibinfo {author} {\bibfnamefont {M.}~\bibnamefont {Germain}}, \bibinfo {author} {\bibfnamefont {K.}~\bibnamefont {Gregor}}, \bibinfo {author} {\bibfnamefont {I.}~\bibnamefont {Murray}},\ and\ \bibinfo {author} {\bibfnamefont {H.}~\bibnamefont {Larochelle}},\ }\bibfield  {title} {\bibinfo {title} {Made: Masked autoencoder for distribution estimation},\ }in\ \href {https://proceedings.mlr.press/v37/germain15.html} {\emph {\bibinfo {booktitle} {Proceedings of the 32nd International Conference on Machine Learning}}},\ \bibinfo {series} {Proceedings of Machine Learning Research}, Vol.~\bibinfo {volume} {37}\ (\bibinfo  {publisher} {PMLR},\ \bibinfo {year} {2015})\ pp.\ \bibinfo {pages} {881--889}\BibitemShut {NoStop}%
\bibitem [{\citenamefont {Uria}\ \emph {et~al.}(2016)\citenamefont {Uria}, \citenamefont {Côté}, \citenamefont {Gregor}, \citenamefont {Murray},\ and\ \citenamefont {Larochelle}}]{Uria2016NADE}%
  \BibitemOpen
  \bibfield  {author} {\bibinfo {author} {\bibfnamefont {B.}~\bibnamefont {Uria}}, \bibinfo {author} {\bibfnamefont {M.}~\bibnamefont {Côté}}, \bibinfo {author} {\bibfnamefont {K.}~\bibnamefont {Gregor}}, \bibinfo {author} {\bibfnamefont {I.}~\bibnamefont {Murray}},\ and\ \bibinfo {author} {\bibfnamefont {H.}~\bibnamefont {Larochelle}},\ }\bibfield  {title} {\bibinfo {title} {Neural autoregressive distribution estimation},\ }\href@noop {} {\bibfield  {journal} {\bibinfo  {journal} {Journal of Machine Learning Research}\ }\textbf {\bibinfo {volume} {17}} (\bibinfo {year} {2016})}\BibitemShut {NoStop}%
\bibitem [{\citenamefont {Inack}\ \emph {et~al.}(2022)\citenamefont {Inack}, \citenamefont {Morawetz},\ and\ \citenamefont {Melko}}]{Inack2022}%
  \BibitemOpen
  \bibfield  {author} {\bibinfo {author} {\bibfnamefont {E.~M.}\ \bibnamefont {Inack}}, \bibinfo {author} {\bibfnamefont {S.}~\bibnamefont {Morawetz}},\ and\ \bibinfo {author} {\bibfnamefont {R.~G.}\ \bibnamefont {Melko}},\ }\bibfield  {title} {\bibinfo {title} {Neural annealing and visualization of autoregressive neural networks in the newman-moore model},\ }\bibfield  {journal} {\bibinfo  {journal} {Condensed Matter}\ }\textbf {\bibinfo {volume} {7}},\ \href {https://doi.org/10.3390/condmat7020038} {10.3390/condmat7020038} (\bibinfo {year} {2022})\BibitemShut {NoStop}%
\bibitem [{\citenamefont {Ciarella}\ \emph {et~al.}(2023)\citenamefont {Ciarella}, \citenamefont {Trinquier}, \citenamefont {Weigt},\ and\ \citenamefont {Zamponi}}]{Ciarella_2023}%
  \BibitemOpen
  \bibfield  {author} {\bibinfo {author} {\bibfnamefont {S.}~\bibnamefont {Ciarella}}, \bibinfo {author} {\bibfnamefont {J.}~\bibnamefont {Trinquier}}, \bibinfo {author} {\bibfnamefont {M.}~\bibnamefont {Weigt}},\ and\ \bibinfo {author} {\bibfnamefont {F.}~\bibnamefont {Zamponi}},\ }\bibfield  {title} {\bibinfo {title} {Machine-learning-assisted monte carlo fails at sampling computationally hard problems},\ }\href {https://doi.org/10.1088/2632-2153/acbe91} {\bibfield  {journal} {\bibinfo  {journal} {Machine Learning: Science and Technology}\ }\textbf {\bibinfo {volume} {4}},\ \bibinfo {pages} {010501} (\bibinfo {year} {2023})}\BibitemShut {NoStop}%
\bibitem [{\citenamefont {Kirkpatrick}\ \emph {et~al.}(1983)\citenamefont {Kirkpatrick}, \citenamefont {Gelatt},\ and\ \citenamefont {Vecchi}}]{Kirkpatrick1983}%
  \BibitemOpen
  \bibfield  {author} {\bibinfo {author} {\bibfnamefont {S.}~\bibnamefont {Kirkpatrick}}, \bibinfo {author} {\bibfnamefont {C.~D.}\ \bibnamefont {Gelatt}},\ and\ \bibinfo {author} {\bibfnamefont {A.}~\bibnamefont {Vecchi}},\ }\bibfield  {title} {\bibinfo {title} {Optimization by simulated annealing},\ }\href@noop {} {\bibfield  {journal} {\bibinfo  {journal} {Science}\ }\textbf {\bibinfo {volume} {220}},\ \bibinfo {pages} {671} (\bibinfo {year} {1983})}\BibitemShut {NoStop}%
\bibitem [{\citenamefont {Hamze}\ \emph {et~al.}(2020)\citenamefont {Hamze}, \citenamefont {Raymond}, \citenamefont {Pattison}, \citenamefont {Biswas},\ and\ \citenamefont {Katzgraber}}]{WPE2020}%
  \BibitemOpen
  \bibfield  {author} {\bibinfo {author} {\bibfnamefont {F.}~\bibnamefont {Hamze}}, \bibinfo {author} {\bibfnamefont {J.}~\bibnamefont {Raymond}}, \bibinfo {author} {\bibfnamefont {C.~A.}\ \bibnamefont {Pattison}}, \bibinfo {author} {\bibfnamefont {K.}~\bibnamefont {Biswas}},\ and\ \bibinfo {author} {\bibfnamefont {H.~G.}\ \bibnamefont {Katzgraber}},\ }\bibfield  {title} {\bibinfo {title} {Wishart planted ensemble: A tunably rugged pairwise ising model with a first-order phase transition},\ }\href@noop {} {\bibfield  {journal} {\bibinfo  {journal} {Phys. Rev. E}\ }\textbf {\bibinfo {volume} {101}},\ \bibinfo {pages} {052102} (\bibinfo {year} {2020})}\BibitemShut {NoStop}%
\bibitem [{\citenamefont {Hammond}\ \emph {et~al.}(2011)\citenamefont {Hammond}, \citenamefont {Vandergheynst},\ and\ \citenamefont {Gribonval}}]{HAMMOND2011}%
  \BibitemOpen
  \bibfield  {author} {\bibinfo {author} {\bibfnamefont {D.~K.}\ \bibnamefont {Hammond}}, \bibinfo {author} {\bibfnamefont {P.}~\bibnamefont {Vandergheynst}},\ and\ \bibinfo {author} {\bibfnamefont {R.}~\bibnamefont {Gribonval}},\ }\bibfield  {title} {\bibinfo {title} {Wavelets on graphs via spectral graph theory},\ }\href@noop {} {\bibfield  {journal} {\bibinfo  {journal} {Applied and Computational Harmonic Analysis}\ }\textbf {\bibinfo {volume} {30}},\ \bibinfo {pages} {129} (\bibinfo {year} {2011})}\BibitemShut {NoStop}%
\bibitem [{\citenamefont {Defferrard}\ \emph {et~al.}(2016)\citenamefont {Defferrard}, \citenamefont {Bresson},\ and\ \citenamefont {Vandergheynst}}]{NIPS2016_GCN2}%
  \BibitemOpen
  \bibfield  {author} {\bibinfo {author} {\bibfnamefont {M.}~\bibnamefont {Defferrard}}, \bibinfo {author} {\bibfnamefont {X.}~\bibnamefont {Bresson}},\ and\ \bibinfo {author} {\bibfnamefont {P.}~\bibnamefont {Vandergheynst}},\ }\bibfield  {title} {\bibinfo {title} {Convolutional neural networks on graphs with fast localized spectral filtering},\ }in\ \href@noop {} {\emph {\bibinfo {booktitle} {Advances in Neural Information Processing Systems}}},\ Vol.~\bibinfo {volume} {29}\ (\bibinfo {year} {2016})\BibitemShut {NoStop}%
\bibitem [{\citenamefont {Kipf}\ and\ \citenamefont {Welling}(2017)}]{semi2017GCN3}%
  \BibitemOpen
  \bibfield  {author} {\bibinfo {author} {\bibfnamefont {T.~N.}\ \bibnamefont {Kipf}}\ and\ \bibinfo {author} {\bibfnamefont {M.}~\bibnamefont {Welling}},\ }\bibfield  {title} {\bibinfo {title} {Semi-supervised classification with graph convolutional networks},\ }in\ \href@noop {} {\emph {\bibinfo {booktitle} {International Conference on Learning Representations}}}\ (\bibinfo {year} {2017})\BibitemShut {NoStop}%
\bibitem [{\citenamefont {Andrew}(2013)}]{Andrew2014}%
  \BibitemOpen
  \bibfield  {author} {\bibinfo {author} {\bibfnamefont {L.}~\bibnamefont {Andrew}},\ }\bibfield  {title} {\bibinfo {title} {Ising formulations of many np problems},\ }\href@noop {} {\bibfield  {journal} {\bibinfo  {journal} {Frontiers in Physics}\ }\textbf {\bibinfo {volume} {2}},\ \bibinfo {pages} {5} (\bibinfo {year} {2013})}\BibitemShut {NoStop}%
\bibitem [{\citenamefont {Boettcher}(2022)}]{Boettcher2022}%
  \BibitemOpen
  \bibfield  {author} {\bibinfo {author} {\bibfnamefont {S.}~\bibnamefont {Boettcher}},\ }\bibfield  {title} {\bibinfo {title} {Inability of a graph neural network heuristic to outperform greedy algorithms in solving combinatorial optimization problems},\ }\href {https://doi.org/10.1038/s42256-022-00587-0} {\bibfield  {journal} {\bibinfo  {journal} {Nature Machine Intelligence}\ ,\ \bibinfo {pages} {2522}} (\bibinfo {year} {2022})}\BibitemShut {NoStop}%
\bibitem [{\citenamefont {Angelini}\ and\ \citenamefont {Ricci-Tersenghi}(2022)}]{Angelini2022}%
  \BibitemOpen
  \bibfield  {author} {\bibinfo {author} {\bibfnamefont {M.~C.}\ \bibnamefont {Angelini}}\ and\ \bibinfo {author} {\bibfnamefont {F.}~\bibnamefont {Ricci-Tersenghi}},\ }\bibfield  {title} {\bibinfo {title} {Modern graph neural networks do worse than classical greedy algorithms in solving combinatorial optimization problems like maximum independent set},\ }\href {https://doi.org/10.1038/s42256-022-00589-y} {\bibfield  {journal} {\bibinfo  {journal} {Nature Machine Intelligence}\ ,\ \bibinfo {pages} {29}} (\bibinfo {year} {2022})}\BibitemShut {NoStop}%
\bibitem [{\citenamefont {Mills}\ \emph {et~al.}(2020)\citenamefont {Mills}, \citenamefont {Ronagh},\ and\ \citenamefont {Tamblyn}}]{Mills2020FTG}%
  \BibitemOpen
  \bibfield  {author} {\bibinfo {author} {\bibfnamefont {K.}~\bibnamefont {Mills}}, \bibinfo {author} {\bibfnamefont {P.}~\bibnamefont {Ronagh}},\ and\ \bibinfo {author} {\bibfnamefont {I.}~\bibnamefont {Tamblyn}},\ }\bibfield  {title} {\bibinfo {title} {Finding the ground state of spin hamiltonians with reinforcement learning},\ }\href {https://api.semanticscholar.org/CorpusID:211677469} {\bibfield  {journal} {\bibinfo  {journal} {Nature Machine Intelligence}\ }\textbf {\bibinfo {volume} {2}},\ \bibinfo {pages} {509 } (\bibinfo {year} {2020})}\BibitemShut {NoStop}%
\bibitem [{\citenamefont {Fan}\ \emph {et~al.}(2023)\citenamefont {Fan}, \citenamefont {Shen}, \citenamefont {Nussinov}, \citenamefont {Liu}, \citenamefont {Sun},\ and\ \citenamefont {Liu}}]{Fan2023DIRAC}%
  \BibitemOpen
  \bibfield  {author} {\bibinfo {author} {\bibfnamefont {C.}~\bibnamefont {Fan}}, \bibinfo {author} {\bibfnamefont {M.}~\bibnamefont {Shen}}, \bibinfo {author} {\bibfnamefont {Z.}~\bibnamefont {Nussinov}}, \bibinfo {author} {\bibfnamefont {Z.}~\bibnamefont {Liu}}, \bibinfo {author} {\bibfnamefont {Y.}~\bibnamefont {Sun}},\ and\ \bibinfo {author} {\bibfnamefont {Y.-Y.}\ \bibnamefont {Liu}},\ }\bibfield  {title} {\bibinfo {title} {Searching for spin glass ground states through deep reinforcement learning},\ }\href {https://doi.org/10.1038/s41467-023-36363-w} {\bibfield  {journal} {\bibinfo  {journal} {Nature Communications}\ }\textbf {\bibinfo {volume} {14}} (\bibinfo {year} {2023})}\BibitemShut {NoStop}%
\bibitem [{\citenamefont {Mohseni}\ \emph {et~al.}(2022)\citenamefont {Mohseni}, \citenamefont {McMahon},\ and\ \citenamefont {Byrnes}}]{Mohseni2022IsingMA}%
  \BibitemOpen
  \bibfield  {author} {\bibinfo {author} {\bibfnamefont {N.}~\bibnamefont {Mohseni}}, \bibinfo {author} {\bibfnamefont {P.~L.}\ \bibnamefont {McMahon}},\ and\ \bibinfo {author} {\bibfnamefont {T.}~\bibnamefont {Byrnes}},\ }\bibfield  {title} {\bibinfo {title} {Ising machines as hardware solvers of combinatorial optimization problems},\ }\href {https://doi.org/10.1038/s42254-022-00440-8} {\bibfield  {journal} {\bibinfo  {journal} {Nature Reviews Physics}\ }\textbf {\bibinfo {volume} {4}},\ \bibinfo {pages} {363 } (\bibinfo {year} {2022})}\BibitemShut {NoStop}%
\bibitem [{\citenamefont {Tiunov}\ \emph {et~al.}(2019)\citenamefont {Tiunov}, \citenamefont {Ulanov},\ and\ \citenamefont {Lvovsky}}]{Tiunov2019}%
  \BibitemOpen
  \bibfield  {author} {\bibinfo {author} {\bibfnamefont {E.}~\bibnamefont {Tiunov}}, \bibinfo {author} {\bibfnamefont {A.}~\bibnamefont {Ulanov}},\ and\ \bibinfo {author} {\bibfnamefont {A.}~\bibnamefont {Lvovsky}},\ }\bibfield  {title} {\bibinfo {title} {Annealing by simulating the coherent ising machine},\ }\href {https://doi.org/10.1364/OE.27.010288} {\bibfield  {journal} {\bibinfo  {journal} {Optics Express}\ }\textbf {\bibinfo {volume} {27}},\ \bibinfo {pages} {10288} (\bibinfo {year} {2019})}\BibitemShut {NoStop}%
\bibitem [{\citenamefont {King}\ \emph {et~al.}(2018)\citenamefont {King}, \citenamefont {Bernoudy}, \citenamefont {King}, \citenamefont {Berkley},\ and\ \citenamefont {Lanting}}]{King2018emulating}%
  \BibitemOpen
  \bibfield  {author} {\bibinfo {author} {\bibfnamefont {A.~D.}\ \bibnamefont {King}}, \bibinfo {author} {\bibfnamefont {W.}~\bibnamefont {Bernoudy}}, \bibinfo {author} {\bibfnamefont {J.}~\bibnamefont {King}}, \bibinfo {author} {\bibfnamefont {A.~J.}\ \bibnamefont {Berkley}},\ and\ \bibinfo {author} {\bibfnamefont {T.}~\bibnamefont {Lanting}},\ }\href@noop {} {\bibinfo {title} {Emulating the coherent ising machine with a mean-field algorithm}} (\bibinfo {year} {2018}),\ \Eprint {https://arxiv.org/abs/1806.08422} {arXiv:1806.08422 [quant-ph]} \BibitemShut {NoStop}%
\bibitem [{\citenamefont {Goto}\ \emph {et~al.}(2021)\citenamefont {Goto}, \citenamefont {Endo}, \citenamefont {Suzuki}, \citenamefont {Sakai}, \citenamefont {Kanao}, \citenamefont {Hamakawa}, \citenamefont {Hidaka}, \citenamefont {Yamasaki},\ and\ \citenamefont {Tatsumura}}]{Hayato2021}%
  \BibitemOpen
  \bibfield  {author} {\bibinfo {author} {\bibfnamefont {H.}~\bibnamefont {Goto}}, \bibinfo {author} {\bibfnamefont {K.}~\bibnamefont {Endo}}, \bibinfo {author} {\bibfnamefont {M.}~\bibnamefont {Suzuki}}, \bibinfo {author} {\bibfnamefont {Y.}~\bibnamefont {Sakai}}, \bibinfo {author} {\bibfnamefont {T.}~\bibnamefont {Kanao}}, \bibinfo {author} {\bibfnamefont {Y.}~\bibnamefont {Hamakawa}}, \bibinfo {author} {\bibfnamefont {R.}~\bibnamefont {Hidaka}}, \bibinfo {author} {\bibfnamefont {M.}~\bibnamefont {Yamasaki}},\ and\ \bibinfo {author} {\bibfnamefont {K.}~\bibnamefont {Tatsumura}},\ }\bibfield  {title} {\bibinfo {title} {High-performance combinatorial optimization based on classical mechanics},\ }\href {https://doi.org/10.1126/sciadv.abe7953} {\bibfield  {journal} {\bibinfo  {journal} {Science Advances}\ }\textbf {\bibinfo {volume} {7}},\ \bibinfo {pages} {eabe7953} (\bibinfo {year} {2021})}\BibitemShut {NoStop}%
\bibitem [{\citenamefont {Oshiyama}\ and\ \citenamefont {Ohzeki}(2022)}]{Oshiyama2022BenchmarkOQ}%
  \BibitemOpen
  \bibfield  {author} {\bibinfo {author} {\bibfnamefont {H.}~\bibnamefont {Oshiyama}}\ and\ \bibinfo {author} {\bibfnamefont {M.}~\bibnamefont {Ohzeki}},\ }\bibfield  {title} {\bibinfo {title} {Benchmark of quantum-inspired heuristic solvers for quadratic unconstrained binary optimization},\ }\href {https://doi.org/10.1038/s41598-022-06070-5} {\bibfield  {journal} {\bibinfo  {journal} {Scientific Reports}\ }\textbf {\bibinfo {volume} {12}} (\bibinfo {year} {2022})}\BibitemShut {NoStop}%
\bibitem [{\citenamefont {Barahona}(1982)}]{FBarahona_1982}%
  \BibitemOpen
  \bibfield  {author} {\bibinfo {author} {\bibfnamefont {F.}~\bibnamefont {Barahona}},\ }\bibfield  {title} {\bibinfo {title} {On the computational complexity of ising spin glass models},\ }\href {https://doi.org/10.1088/0305-4470/15/10/028} {\bibfield  {journal} {\bibinfo  {journal} {Journal of Physics A: Mathematical and General}\ }\textbf {\bibinfo {volume} {15}},\ \bibinfo {pages} {3241} (\bibinfo {year} {1982})}\BibitemShut {NoStop}%
\bibitem [{\citenamefont {Barbier}\ \emph {et~al.}(2013)\citenamefont {Barbier}, \citenamefont {Krzakala}, \citenamefont {Zdeborová},\ and\ \citenamefont {Zhang}}]{Barbier2013}%
  \BibitemOpen
  \bibfield  {author} {\bibinfo {author} {\bibfnamefont {J.}~\bibnamefont {Barbier}}, \bibinfo {author} {\bibfnamefont {F.}~\bibnamefont {Krzakala}}, \bibinfo {author} {\bibfnamefont {L.}~\bibnamefont {Zdeborová}},\ and\ \bibinfo {author} {\bibfnamefont {P.}~\bibnamefont {Zhang}},\ }\bibfield  {title} {\bibinfo {title} {The hard-core model on random graphs revisited},\ }\href {https://doi.org/10.1088/1742-6596/473/1/012021} {\bibfield  {journal} {\bibinfo  {journal} {Journal of Physics: Conference Series}\ }\textbf {\bibinfo {volume} {473}},\ \bibinfo {pages} {012021} (\bibinfo {year} {2013})}\BibitemShut {NoStop}%
\bibitem [{\citenamefont {Coja-Oghlan}\ and\ \citenamefont {Efthymiou}(2015)}]{CojaOghlan2015}%
  \BibitemOpen
  \bibfield  {author} {\bibinfo {author} {\bibfnamefont {A.}~\bibnamefont {Coja-Oghlan}}\ and\ \bibinfo {author} {\bibfnamefont {C.}~\bibnamefont {Efthymiou}},\ }\bibfield  {title} {\bibinfo {title} {On independent sets in random graphs},\ }\href {https://doi.org/https://doi.org/10.1002/rsa.20550} {\bibfield  {journal} {\bibinfo  {journal} {Random Structures \& Algorithms}\ }\textbf {\bibinfo {volume} {47}},\ \bibinfo {pages} {436} (\bibinfo {year} {2015})}\BibitemShut {NoStop}%
\bibitem [{\citenamefont {Gilmer}\ \emph {et~al.}(2017)\citenamefont {Gilmer}, \citenamefont {Schoenholz}, \citenamefont {Riley}, \citenamefont {Vinyals},\ and\ \citenamefont {Dahl}}]{Gilmer2017}%
  \BibitemOpen
  \bibfield  {author} {\bibinfo {author} {\bibfnamefont {J.}~\bibnamefont {Gilmer}}, \bibinfo {author} {\bibfnamefont {S.~S.}\ \bibnamefont {Schoenholz}}, \bibinfo {author} {\bibfnamefont {P.~F.}\ \bibnamefont {Riley}}, \bibinfo {author} {\bibfnamefont {O.}~\bibnamefont {Vinyals}},\ and\ \bibinfo {author} {\bibfnamefont {G.~E.}\ \bibnamefont {Dahl}},\ }\bibfield  {title} {\bibinfo {title} {Neural message passing for quantum chemistry},\ }in\ \href@noop {} {\emph {\bibinfo {booktitle} {Proceedings of the 34th International Conference on Machine Learning - Volume 70}}},\ \bibinfo {series and number} {ICML}\ (\bibinfo  {publisher} {JMLR},\ \bibinfo {year} {2017})\ p.\ \bibinfo {pages} {1263–1272}\BibitemShut {NoStop}%
\bibitem [{\citenamefont {Sherrington}\ and\ \citenamefont {Kirkpatrick}(1975)}]{Sherrington1975SK}%
  \BibitemOpen
  \bibfield  {author} {\bibinfo {author} {\bibfnamefont {D.}~\bibnamefont {Sherrington}}\ and\ \bibinfo {author} {\bibfnamefont {S.}~\bibnamefont {Kirkpatrick}},\ }\bibfield  {title} {\bibinfo {title} {Solvable model of a spin-glass},\ }\href {https://doi.org/10.1103/PhysRevLett.35.1792} {\bibfield  {journal} {\bibinfo  {journal} {Phys. Rev. Lett.}\ }\textbf {\bibinfo {volume} {35}},\ \bibinfo {pages} {1792} (\bibinfo {year} {1975})}\BibitemShut {NoStop}%
\bibitem [{\citenamefont {Edwards}\ and\ \citenamefont {Anderson}(1975)}]{Edwards1975}%
  \BibitemOpen
  \bibfield  {author} {\bibinfo {author} {\bibfnamefont {S.~F.}\ \bibnamefont {Edwards}}\ and\ \bibinfo {author} {\bibfnamefont {P.~W.}\ \bibnamefont {Anderson}},\ }\bibfield  {title} {\bibinfo {title} {Theory of spin glasses},\ }\href {https://doi.org/10.1088/0305-4608/5/5/017} {\bibfield  {journal} {\bibinfo  {journal} {Journal of Physics F: Metal Physics}\ }\textbf {\bibinfo {volume} {5}},\ \bibinfo {pages} {965} (\bibinfo {year} {1975})}\BibitemShut {NoStop}%
\bibitem [{\citenamefont {Goodfellow}\ \emph {et~al.}(2016)\citenamefont {Goodfellow}, \citenamefont {Bengio},\ and\ \citenamefont {Courville}}]{Goodfellow2016}%
  \BibitemOpen
  \bibfield  {author} {\bibinfo {author} {\bibfnamefont {I.}~\bibnamefont {Goodfellow}}, \bibinfo {author} {\bibfnamefont {Y.}~\bibnamefont {Bengio}},\ and\ \bibinfo {author} {\bibfnamefont {A.}~\bibnamefont {Courville}},\ }\href@noop {} {\emph {\bibinfo {title} {Deep Learning}}}\ (\bibinfo  {publisher} {MIT Press},\ \bibinfo {year} {2016})\BibitemShut {NoStop}%
\bibitem [{\citenamefont {Bengio}\ \emph {et~al.}(2021)\citenamefont {Bengio}, \citenamefont {Lodi},\ and\ \citenamefont {Prouvost}}]{Bengio2021}%
  \BibitemOpen
  \bibfield  {author} {\bibinfo {author} {\bibfnamefont {Y.}~\bibnamefont {Bengio}}, \bibinfo {author} {\bibfnamefont {A.}~\bibnamefont {Lodi}},\ and\ \bibinfo {author} {\bibfnamefont {A.}~\bibnamefont {Prouvost}},\ }\bibfield  {title} {\bibinfo {title} {Machine learning for combinatorial optimization: A methodological tour d’horizon},\ }\href {https://doi.org/https://doi.org/10.1016/j.ejor.2020.07.063} {\bibfield  {journal} {\bibinfo  {journal} {European Journal of Operational Research}\ }\textbf {\bibinfo {volume} {290}},\ \bibinfo {pages} {405} (\bibinfo {year} {2021})}\BibitemShut {NoStop}%
\bibitem [{\citenamefont {Veli{\v{c}}kovi{\'{c}}}\ \emph {et~al.}(2018)\citenamefont {Veli{\v{c}}kovi{\'{c}}}, \citenamefont {Cucurull}, \citenamefont {Casanova}, \citenamefont {Romero}, \citenamefont {Li{\`{o}}},\ and\ \citenamefont {Bengio}}]{velickovic2018graph}%
  \BibitemOpen
  \bibfield  {author} {\bibinfo {author} {\bibfnamefont {P.}~\bibnamefont {Veli{\v{c}}kovi{\'{c}}}}, \bibinfo {author} {\bibfnamefont {G.}~\bibnamefont {Cucurull}}, \bibinfo {author} {\bibfnamefont {A.}~\bibnamefont {Casanova}}, \bibinfo {author} {\bibfnamefont {A.}~\bibnamefont {Romero}}, \bibinfo {author} {\bibfnamefont {P.}~\bibnamefont {Li{\`{o}}}},\ and\ \bibinfo {author} {\bibfnamefont {Y.}~\bibnamefont {Bengio}},\ }\bibfield  {title} {\bibinfo {title} {{Graph Attention Networks}},\ }\href {https://openreview.net/forum?id=rJXMpikCZ} {\bibfield  {journal} {\bibinfo  {journal} {International Conference on Learning Representations}\ } (\bibinfo {year} {2018})}\BibitemShut {NoStop}%
\bibitem [{\citenamefont {Hamilton}\ \emph {et~al.}(2017)\citenamefont {Hamilton}, \citenamefont {Ying},\ and\ \citenamefont {Leskovec}}]{Hamilton2017}%
  \BibitemOpen
  \bibfield  {author} {\bibinfo {author} {\bibfnamefont {W.}~\bibnamefont {Hamilton}}, \bibinfo {author} {\bibfnamefont {Z.}~\bibnamefont {Ying}},\ and\ \bibinfo {author} {\bibfnamefont {J.}~\bibnamefont {Leskovec}},\ }\bibfield  {title} {\bibinfo {title} {Inductive representation learning on large graphs},\ }in\ \href {https://proceedings.neurips.cc/paper_files/paper/2017/file/5dd9db5e033da9c6fb5ba83c7a7ebea9-Paper.pdf} {\emph {\bibinfo {booktitle} {Advances in Neural Information Processing Systems}}},\ Vol.~\bibinfo {volume} {30}\ (\bibinfo {year} {2017})\BibitemShut {NoStop}%
\bibitem [{\citenamefont {Swendsen}\ and\ \citenamefont {Wang}(1986)}]{PT1986}%
  \BibitemOpen
  \bibfield  {author} {\bibinfo {author} {\bibfnamefont {R.~H.}\ \bibnamefont {Swendsen}}\ and\ \bibinfo {author} {\bibfnamefont {J.-S.}\ \bibnamefont {Wang}},\ }\bibfield  {title} {\bibinfo {title} {Replica monte carlo simulation of spin-glasses},\ }\href {https://doi.org/10.1103/PhysRevLett.57.2607} {\bibfield  {journal} {\bibinfo  {journal} {Physical review letters}\ }\textbf {\bibinfo {volume} {57}},\ \bibinfo {pages} {2607} (\bibinfo {year} {1986})}\BibitemShut {NoStop}%
\bibitem [{\citenamefont {Earl}\ and\ \citenamefont {Deem}(2005)}]{PT2005}%
  \BibitemOpen
  \bibfield  {author} {\bibinfo {author} {\bibfnamefont {D.~J.}\ \bibnamefont {Earl}}\ and\ \bibinfo {author} {\bibfnamefont {M.~W.}\ \bibnamefont {Deem}},\ }\bibfield  {title} {\bibinfo {title} {Parallel tempering: Theory{,} applications{,} and new perspectives},\ }\href {https://doi.org/10.1039/B509983H} {\bibfield  {journal} {\bibinfo  {journal} {Phys. Chem. Chem. Phys.}\ }\textbf {\bibinfo {volume} {7}},\ \bibinfo {pages} {3910} (\bibinfo {year} {2005})}\BibitemShut {NoStop}%
\bibitem [{\citenamefont {Panchenko}(2012)}]{Panchenko2012TheSM}%
  \BibitemOpen
  \bibfield  {author} {\bibinfo {author} {\bibfnamefont {D.}~\bibnamefont {Panchenko}},\ }\bibfield  {title} {\bibinfo {title} {The sherrington-kirkpatrick model: An overview},\ }\href {https://doi.org/10.1007/s10955-012-0586-7} {\bibfield  {journal} {\bibinfo  {journal} {Journal of Statistical Physics}\ }\textbf {\bibinfo {volume} {149}},\ \bibinfo {pages} {362} (\bibinfo {year} {2012})}\BibitemShut {NoStop}%
\bibitem [{\citenamefont {Panchenko}(2013)}]{Panchenko2013}%
  \BibitemOpen
  \bibfield  {author} {\bibinfo {author} {\bibfnamefont {D.}~\bibnamefont {Panchenko}},\ }\href {https://doi.org/https://doi.org/10.1007/978-1-4614-6289-7} {\emph {\bibinfo {title} {The Sherrington-Kirkpatrick Model}}}\ (\bibinfo  {publisher} {Springer New York},\ \bibinfo {year} {2013})\BibitemShut {NoStop}%
\end{thebibliography}%

%%%Supplemental Material%%%
\newpage \onecolumngrid \newpage { \center \bf \large  Supplemental Material for: \\ Message Passing Variational Autoregressive Network\\ for Solving Intractable Ising Models \vspace*{0.1cm}\\  \vspace*{0.0cm} } 

\begin{center} Qunlong Ma,$^{1}$ Zhi Ma,$^{1}$ Jinlong Xu,$^{1}$ Hairui Zhang,$^{2}$ and Ming Gao$^{1}$ \\ 
\vspace*{0.15cm}
\small{\textit{$^{1}$ Henan Key Laboratory of Network Cryptography Technology, Zhengzhou 450001, China}} \\
\small{\textit{$^{2}$ Department of Algorithm, TuringQ Co., Ltd., Shanghai 200240, China}} \\
\vspace*{0.25cm} 
\end{center}

%\twocolumngrid
%reset numbering of equations in SM 
%\setcounter{equation}{0} 
%reset numbering of figures in SM 
%\setcounter{figure}{0} 
%reset numbering of sections in SM
\setcounter{section}{0}

%%%%%%%%%% Merge with supplemental materials %%%%%%%%%% 
%%%%%%%%%% Prefix a "S" to all equations, figures, tables and reset the counter %%%%%%%%%% 
\setcounter{equation}{0} 
\setcounter{figure}{0} 
\setcounter{page}{1} 
\makeatletter 
\renewcommand{\theequation}{S\arabic{equation}} 
\renewcommand{\thefigure}{S\arabic{figure}} 
\renewcommand{\bibnumfmt}[1]{[S#1]} 
\renewcommand{\citenumfont}[1]{#1} 
%%%%%%%%%% Prefix a "S" to all equations, figures, tables and reset the counter %%%%%%%%%%

\section{More results about various message passing mechanisms}
\label{appen1}

In this section, we consider the combinations of various messaging mechanisms and VAN. Since our focus is on messaging mechanisms, we always use the messaging mechanism to refer to its combination with VAN instead.

First, we compare the three MP mechanisms we proposed in Eq.~(\ref{eq8a}), Eq.~(\ref{eq8b}), and Eq.~(\ref{eq8c}) on the same 30 instances in Fig.~\ref{fig1} of the main text. Since the Eq.~(\ref{eq8a}) used unsigned weights of graphs, we name it weighted message passing mechanism (Weighted MP).

\begin{figure}[h]
\includegraphics[scale=0.6]{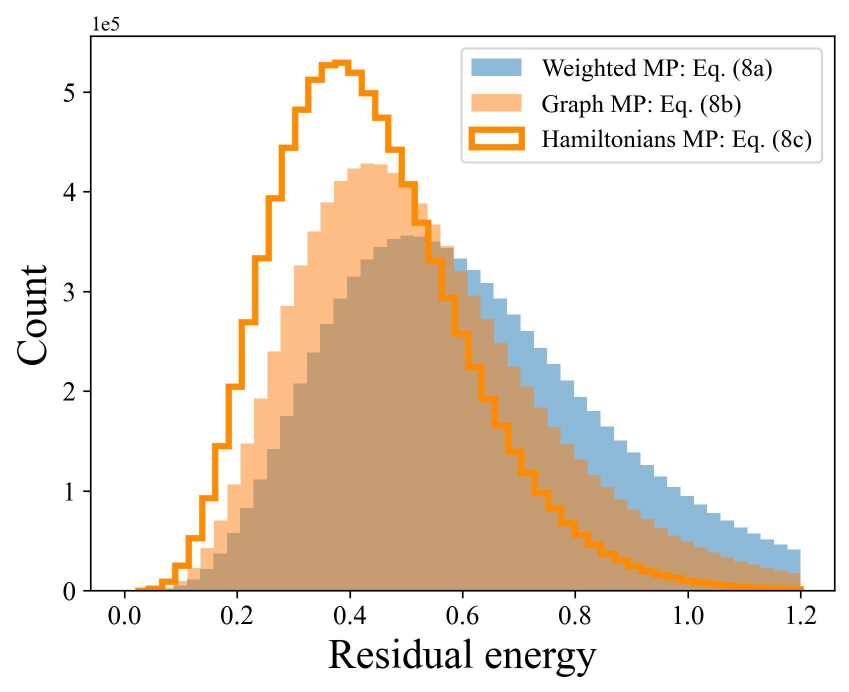}
\caption{\label{figS1} For the three message passing mechanisms we designed, the residual energy histogram on the WPE with system size $N=60$ and $\alpha=0.2$ same as in Fig.~\ref{fig1}. Each method contains $9\times10^{6}$ configurations obtained from 30 instances and each for 30 runs.}
\end{figure}

It can be found in Fig.~\ref{figS1} that the performance of Graph MP is better than Weighted MP, which illustrates the benefits of considering the sign of $\{J_{ij}\}$ in the message passing process. At the same time, Hamiltonians MP performs best among three mechanisms, and thus is the mechanism we ultimately utilize in MPVAN.

Second, we design experiments to determine that Hamiltonians MP performs best compared with other mechanisms, not solely as a result of considering the values of neighboring spins. Therefore, for MP in GCon-VAN (Eq.~(\ref{eq7}) in the main text) and Graph MP (Eq.~(\ref{eq8b}) in the main text), we design their variants to incorporate the values of neighboring spins, which is defined as
\begin{equation}
\label{eqS1}
\begin{aligned}
    &m_{i} = \sum_{j\in N_a(i)} A_{ij}s_{j} h_{j},\\
    &\langle h_{i}\rangle_{MP} = \frac{m_{i}+h_i}{deg(i)+1},
\end{aligned}
\end{equation}
and
\begin{equation}
\label{eqS2}
\begin{aligned}
    &m_{i} = \sum_{j\in N_a(i)} J_{ij}s_{j} h_{j}, \\& \langle h_{i} \rangle _{MP} = \frac{m_i}{\sum_{j\in N_a(i)}|J_{ij}|} + h_{i},
\end{aligned}
\end{equation}
which are named Spin MP in GCon-VAN (we denote the combination as Spin GCon-VAN) and graph with spin message passing mechanism (Graph Spin MP), respectively. We then conduct the same numerical experiments as in Fig.~\ref{fig2} and Fig.~\ref{figS1}.

\begin{figure}[h]
\includegraphics[scale=0.6]{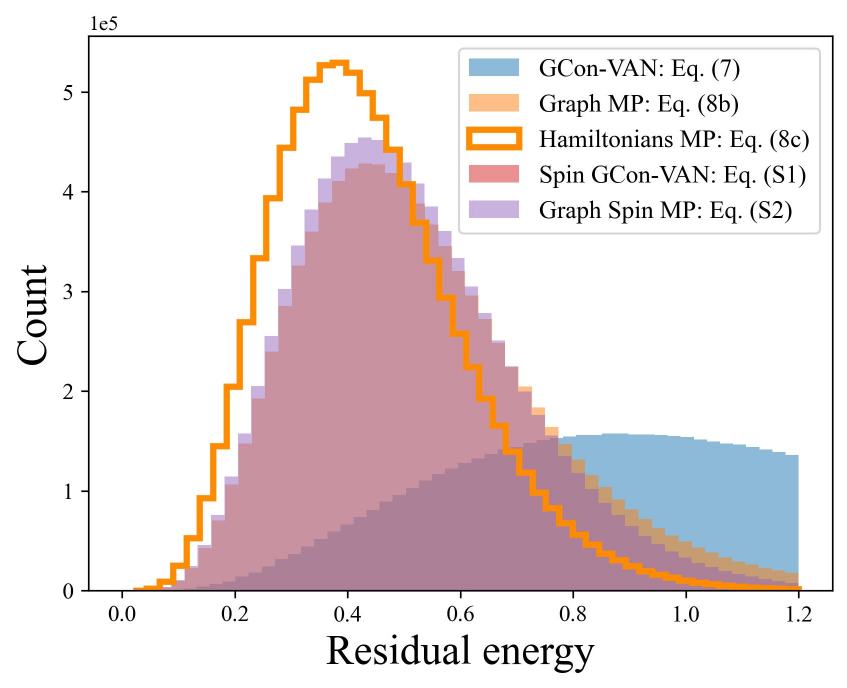}
\caption{\label{figS2} For the five MP mechanisms, the residual energy histogram on the WPE with system size $N=60$ and $\alpha=0.2$ same as in Fig.~\ref{fig1}. Each method contains $9\times10^{6}$ configurations obtained from 30 instances and each for 30 runs. Note that the results of Spin GCon-VAN are too poor to be shown in the figure.}
\end{figure}

It can be found in Fig.~\ref{figS2} that when considering the values of neighboring spins, the performance of Graph MP has slightly improved, while the performance of GCon-VAN has significantly deteriorated. Meanwhile, the performance of Graph Spin MP is still inferior to Hamiltonians MP when combined with VAN.

We now show the mean and standard deviation in Fig.~\ref{fig1}, Fig.~\ref{figS1} and Fig.~\ref{figS2} of the energy of $9\times10^6$ configurations from all the mechanisms mentioned above when they are combined with VAN.

\begin{table*}[h]
\caption{\label{tableS1} The mean and standard deviation in Fig.~\ref{fig1}, Fig.~\ref{figS1} and Fig.~\ref{figS2} of the energy of $9\times10^6$ configurations drawn from each method.}
\begin{ruledtabular}
\begin{tabular}{lcccccccc}
 mechanisms & VAN & VCA & GCon-VAN & MPVAN\footnote{Here MPVAN denotes the combination of VAN with Hamiltonians MP.} & Graph MP & Weighted MP & Spin GCon-VAN & Graph Spin MP \\ \hline
 $mean$ & 0.64064516 & 0.59639586 & 1.23419866 & \textbf{0.43949546} & 0.53990054 & 0.63682414 & 5.95293475 & 0.51100764 \\
 $std$ & 0.28714161 & 0.24911534 & 0.63625889 & \textbf{0.17004693} & 0.23393397 & 0.27861996 & 2.44381817 & 0.20037946 \\
\end{tabular}
\end{ruledtabular}
\end{table*}

\begin{figure}[h] %[b]底部,[h]在这里,[t]顶部
\includegraphics[scale=0.6]{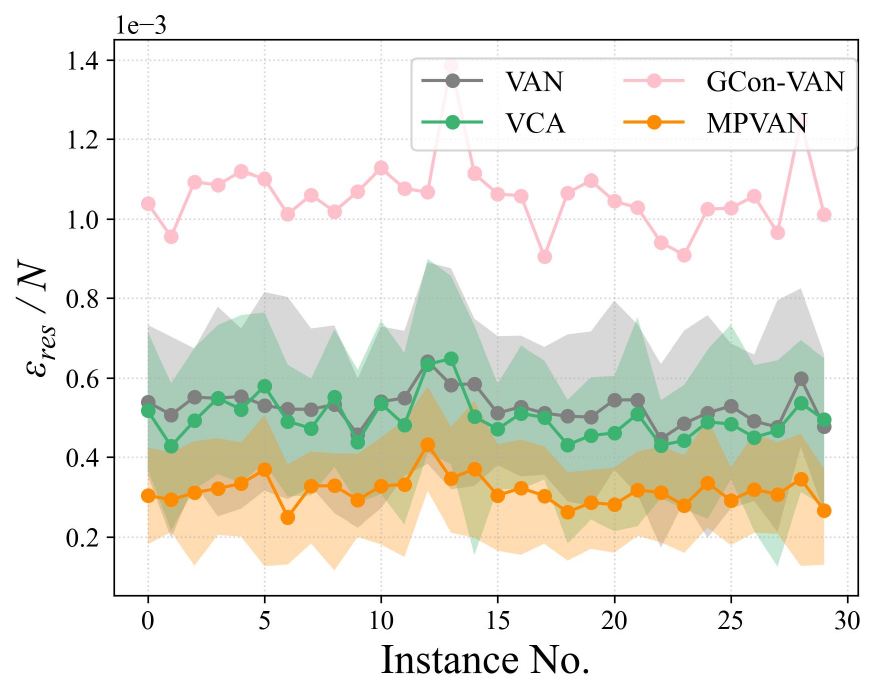}
\caption{\label{figS3} For MPVAN and existing methods, the residual energy per site on the 30 WPE instances with $N=60$ and $\alpha=0.2$ and each for 30 runs.}
\end{figure}

Third, we compare the residual energy per site for MPVAN with existing methods across 30 instances of the WPE in Fig.~\ref{figS3}. These problem instances are hard to solve, and none of the above methods can find the ground state. The results of GCon-VAN are poor and the area between the maximum and minimum values is too large, so we do not plot its corresponding color band.

As depicted in Fig.~\ref{figS3}, MPVAN consistently achieves the best results on all instances. Notably, the residual energy decreases by a substantial margin, ranging from 16.82\% to 33.49\%, when compared to the results obtained using other methods. Also, the differences between instances are large, which indicates the necessity of using the average of 30 instances to reflect the general properties of problems in other experiments.

It can also be seen that GCon-VAN yields the least favorable performance, which is a supplemental result to Ref.~\cite{Panfeng2021} when problems are defined on dense graphs.

Fourthly, since a order of spins plays a critical role in the variational conditional probability $q_\theta(s_i|s_1,s_2,\dots,s_{i-1})$ in Eq.~(\ref{eq2}) of the main text and autoregressive message passing process, we consider the impact of it within MPVAN. To investigate this, we select two instances from Fig.~\ref{figS3} where the advantages in $\epsilon_{res}$ of MPVAN over other methods are the smallest (the $7_{th}$ instance) and biggest (the $24_{th}$ instance).

\begin{figure}[h] 
\includegraphics[scale=0.5]{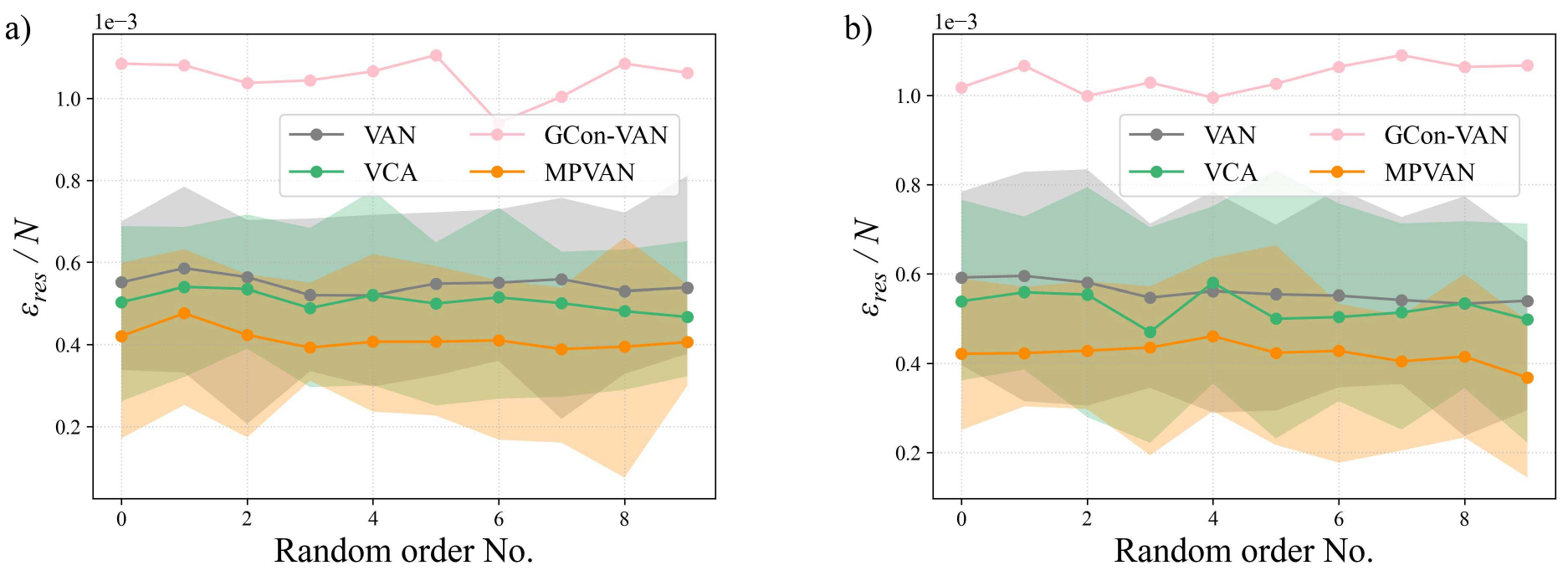}
\caption{\label{figS4} For MPVAN and existing methods with 10 random orders of spins, the residual energy per site. (a) On the $7_{th}$ instance. (b) On the $24_{th}$ instance in Fig.~\ref{figS3}, and each instance for 30 runs.}
\end{figure}

In these two instances, we randomly generate 10 orders of spins and evaluate the performance of MPVAN and existing methods. As illustrated in Fig.~\ref{figS4}, MPVAN consistently achieves the best results on all orders of spins. The results suggest that MPVAN may not exhibit a non-ignorable dependence on the order of spins, a characteristic similar to the standard autoregressive model \cite{Uria2016NADE}.

We show the $\epsilon_{res}$ of all mechanisms mentioned when combined with VAN in Tab.~\ref{tableS2} on the WPE with $N=60$ and $\alpha=0.2$. To demonstrate the differences among instances, we also show the standard deviation of residual energy on 30 instances as $std(\epsilon)$.

\begin{table*}[h]
\caption{\label{tableS2} The $\epsilon_{res}$ and $std(\epsilon)$ of all mechanisms combined with VAN on 30 WPE instances with $N=60$ and $\alpha=0.2$.}
\begin{ruledtabular}
\begin{tabular}{lcccccccc}
 mechanisms & VAN & VCA & GCon-VAN & MPVAN & Graph MP & Weighted MP &Spin GCon-VAN & Graph Spin MP \\ \hline
 $\epsilon_{res}$ & 0.03163008 & 0.02993632 & 0.06354465 & \textbf{0.01890272} & 0.02790873 & 0.03157805 & 0.28992669 & 0.02700393 \\
 $std(\epsilon)$ & 0.00245206 & 0.00322317 & 0.00547017 & \textbf{0.00214708} & 0.00227124 & 0.00265405 & 0.02453801 & 0.00218599 \\
\end{tabular}
\end{ruledtabular}
\end{table*}

It can be seen that MPVAN outperforms all other methods in mean($\epsilon_{res})$, and smaller $std(\epsilon_{res})$ denotes that it works stably on different instances.

\section{The optimal number of layers for MPVAN}
\label{appen2}

Formally, MPVAN and GNN are similar. Meanwhile, the number of layers is an important hyperparameter of GNN, which can cause excessive smoothing when the number of layers is too big, and the effect tends to perform poorly when the number of layers is too small as well. Therefore, we also consider the effect of the number of layers on the performance of MPVAN here.

\begin{figure}[h]
\includegraphics[scale=0.45]{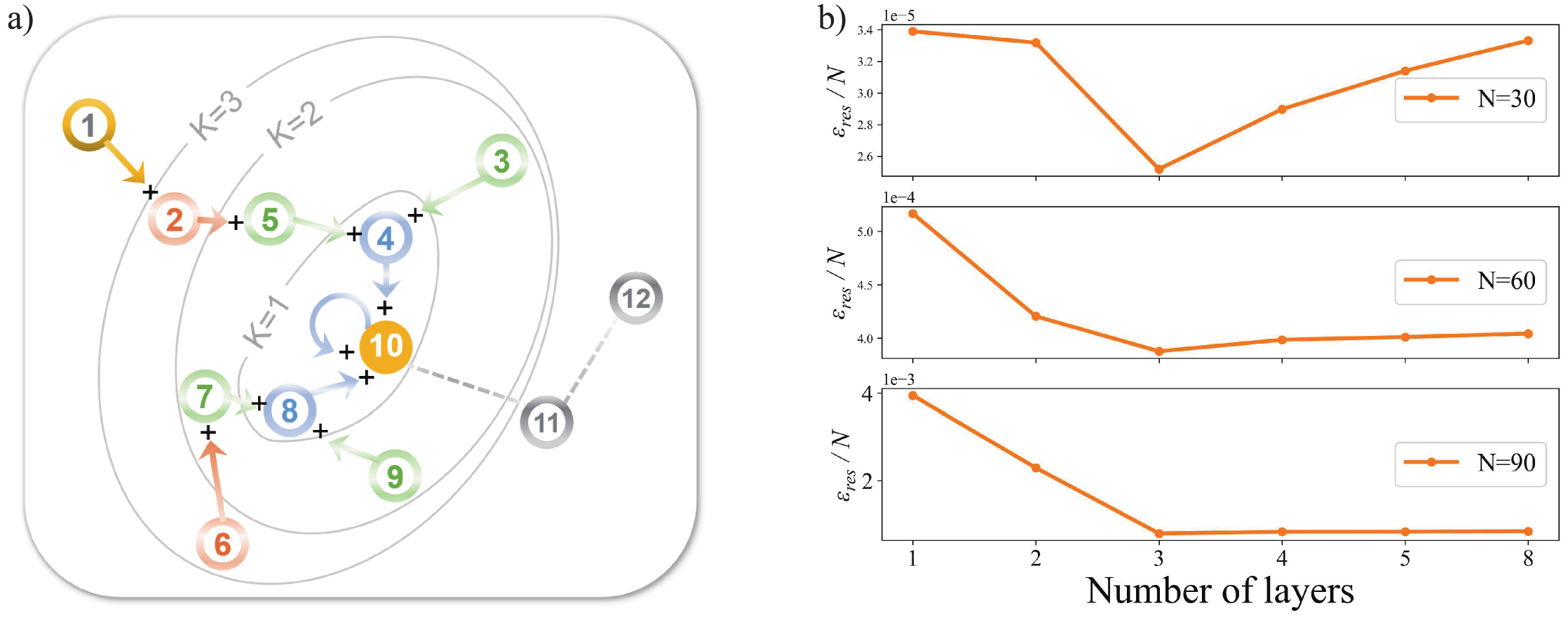}
\caption{\label{figS5} Schematic of the multi-layer autoregressive message passing process and experimental results for MPVAN with different numbers of layers. (a) The multi-layer autoregressive message passing process. For node 10, we highlight the 3-hops neighboring nodes it employs in message passing using different colors (the message passing is illustrated by the solid arrow, and the neighborhood is illustrated by the $k=1,2,3$ circle for node 10), and nodes 11 and 12 are not utilized to preserve the autoregressive property. (b) The residual energy per site varies with the number of layers for MPVAN across system sizes $N = 30, 60$, and 90 on the WPE with $\alpha=0.2$.}
\end{figure}

Message passing process with one layer utilizes features from one-hop neighboring nodes. Moreover, by stacking MPVAN layers, we can access neighboring features from nodes multiple hops away. As demonstrated in Fig.~\ref{figS5}(a), we visualize the message passing process of three layers for node 10. It is important to note that nodes 11 and 12 are deliberately excluded from the MP process to preserve the autoregressive property.

We conduct experiments to determine the optimal number of layers for MPVAN. As illustrated in Fig.~\ref{figS5}(b), the residual energy exhibits a trend of decreasing and then increasing as we vary the number of layers. Notably, the residual energy reaches its minimum when there are 3 layers, a result consistent across system sizes, including $N=30$, $60$, and $90$. Fewer layers limit the ability of networks to pass features from more distant neighboring nodes, while a large number of layers can lead to excessive smoothing of node features, making them indistinguishable. Therefore, we always adopt a 3-layer network for MPVAN.

\section{The occasionality in heuristic methods training} 
\label{appen3}

In the main text, we consistently employ the average of multiple runs for the same instance to provide a representative overview of its general p. Here we present the residual energy for 30 individual runs on a single WPE instance with $N=60$ and $\alpha=0.2$.

\begin{figure}[h]
\includegraphics[scale=0.6]{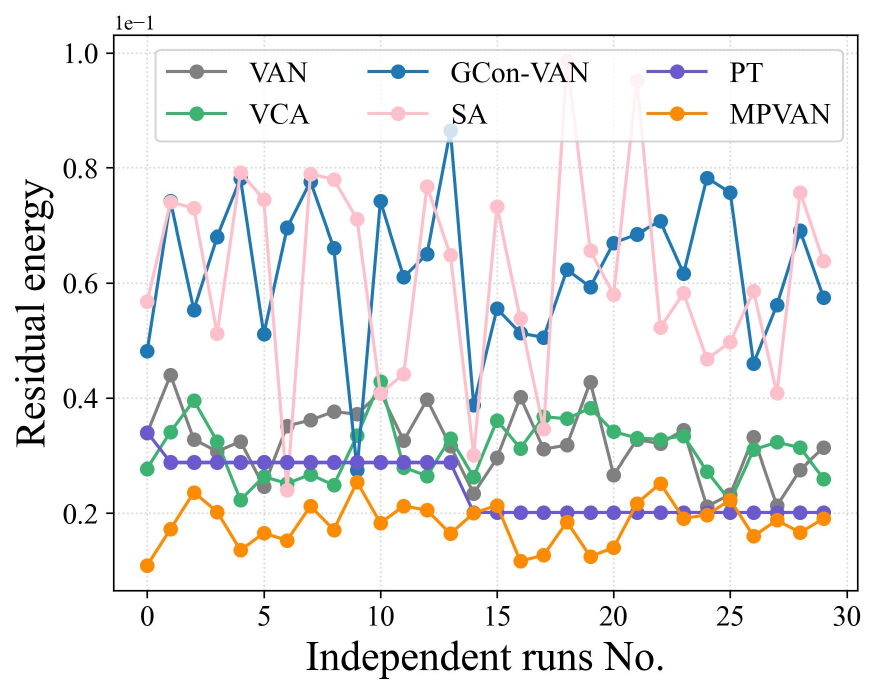}
\caption{\label{figS6} The residual energy of 30 independent runs on the $0\ th$ instance in Fig.~\ref{fig4} of the main text.}
\end{figure}

The residual energy of 30 independent runs are presented in Fig.~\ref{figS6}, which shows the occasionality in heuristic methods training. There is substantial divergence in the outcomes of 30 independent runs of the neural networks, with some runs yielding results that are as much as 168.94\% to 411.36\% higher than the minimum residual energy achieved by the respective methods. This significant variation underscores the necessity of employing the average of multiple runs to accurately reflect the general properties in other experiments.

\section{The approximating ratio results corresponding to Figure.~\ref{fig4}} 
\label{appen4}

We show the approximating ratio results corresponding to Fig.~\ref{fig4} of the main text in Fig.~\ref{figS7}. Similar to residual energy, MPVAN always provides a better approximation of the energy of the ground state than benchmark algorithms on the WPE and the SK model.

\begin{figure}[h]
\includegraphics[scale=0.55]{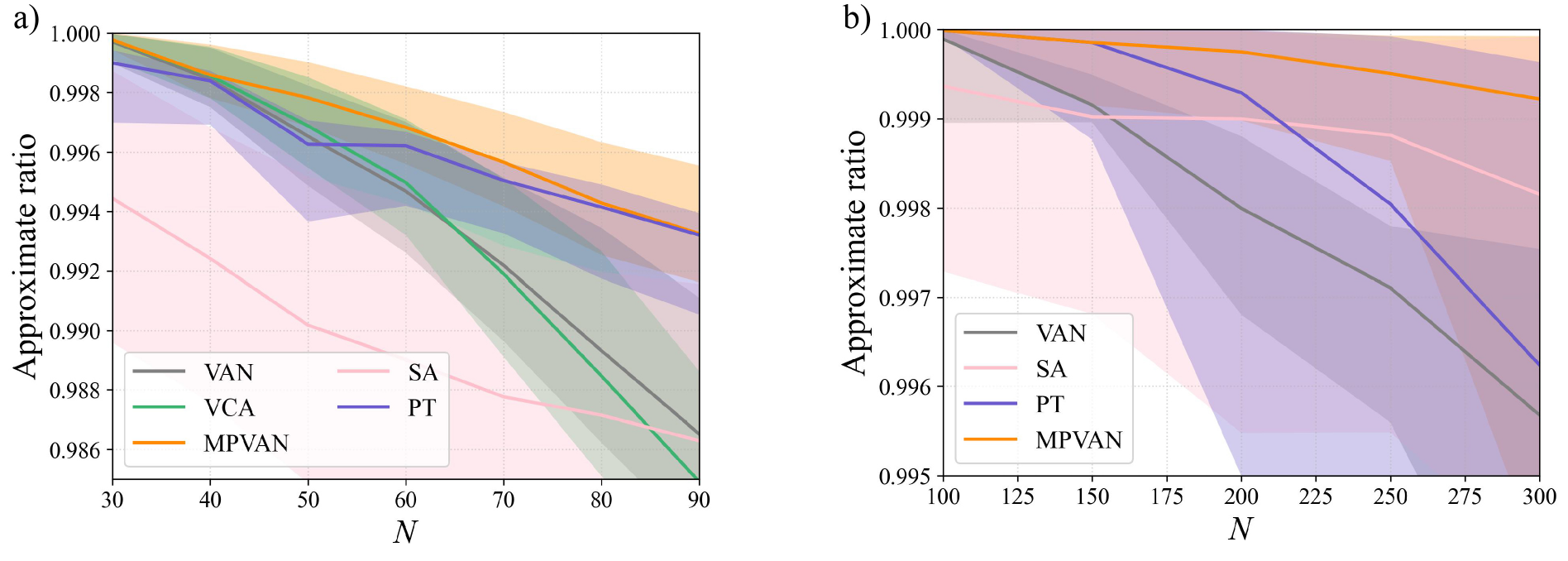}
\caption{\label{figS7} The approximating ratio results corresponding to Fig.~\ref{fig4} of the main text. (a) The results on the WPE. (b) The results on the SK model.}
\end{figure}

Notably, for the WPE with $N=30$ and $N=40$, there are some instances where some methods can find the ground state with a non-negligible probability. However, when $N\geq 50$, none of the methods can identify the ground state for any of the instances.

\section{The training speed} 
\label{appen5}

We also evaluate the training speeds of VCA, VAN, and MPVAN to provide a comprehensive understanding of their computational efficiency. The training speed of VCA is slower compared to VAN and MPVAN due to its reliance on the RNN structure. In the RNN, hidden units from the same layer must be computed sequentially, whereas VAN and MPVAN can be computed concurrently. To quantify these differences, we record the running speeds of MPVAN, VAN, and VCA, with the results summarized in Tab.~\ref{tableS2}. The data presented represents the time required for a single training step on a same instance, and all methods are evaluated on the same NVIDIA 3090 GPU and with the same hyperparameters.

\begin{table}[h]%
\caption{\label{tableS3}Time for 1 training step of MPVAN, VAN, and VCA.}
\begin{ruledtabular}
\begin{tabular}{lccc}
\textrm{method\footnote{Note that all the methods have the same hyperparameters and are compared on the same NVIDIA 3090 GPU.}}&
\textrm{N=30}&
\multicolumn{1}{c}{\textrm{N=60}}&
\textrm{N=90}\\
\colrule
MPVAN & 0.042 sec. & 0.068 sec. & 0.093 sec.\\
VAN & 0.052 sec. & 0.087 sec. & 0.132 sec.\\
VCA & 0.118 sec. & 0.486 sec. & 0.952 sec.\\
\end{tabular}
\end{ruledtabular}
\end{table}

As presented in Tab.~\ref{tableS2}, the training time of VCA for identical parameters is approximately $N/10$ times longer than that of VAN and MPVAN for instances with a system size of $N$. To maintain manageable computational requirements, we limit our comparisons to VCA on instances with $N\leq 100$ in the remaining experiments.

\section{The negative entropy at different learning rates} 
\label{appen6}

Here we show the negative entropy at different learning rates as a supplement to Fig.~\ref{fig3} of the main text.

\begin{figure}
\includegraphics[scale=0.18]{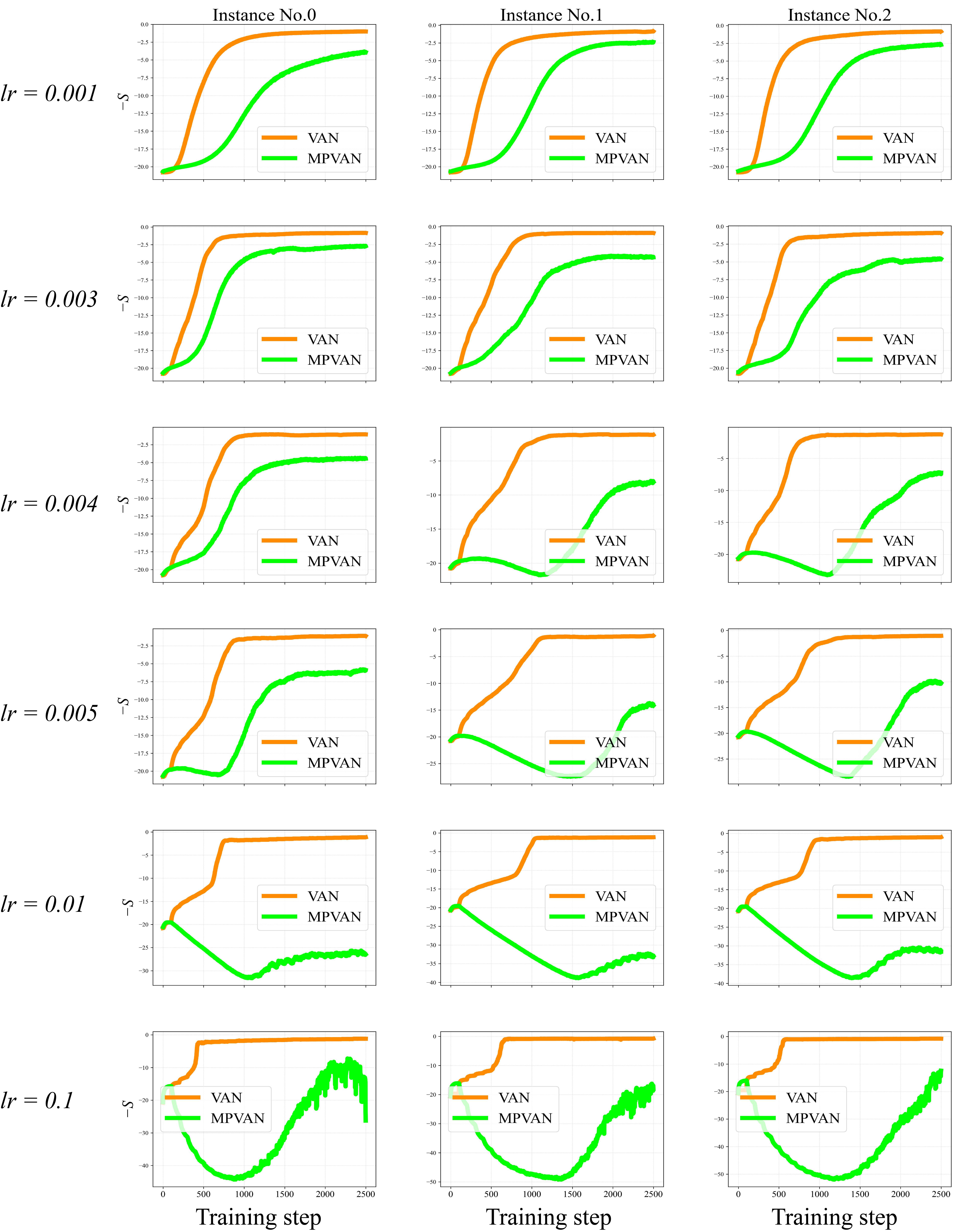}
\caption{\label{figS8} The negative entropy at different learning rates during training on three WPE instances with $N=30$ and $\alpha=0.2$. Note that for all learning rates, $N_{annealing} = 25$ and $N_{training}=100$ and run 10 times on each instance. Figure.~\ref{fig3} of the main text is the result with $lr=0.1$ on Instance No.0.}
\end{figure}

First, the distribution of MPVAN is consistently more uniform than that of VAN. When mode collapse emerges in VAN, MPVAN still has a larger negative entropy, indicating a more uniform distribution. Therefore, MPVAN delays the emergence of mode collapse. The negative entropy of MPVAN is smaller because the message passing changes the variational distribution, making some conditional probabilities small, resulting in a decrease in the joint probability of the configurations.

Second, when the learning rate is greater than 0.004, there is a decrease in negative entropy during training. It is because when learning rate is large, the training samples have a great impact on the back propagation of the neural network. The change gets stronger as the learning rate increases, making some conditional probabilities particularly small, so the negative entropy will be extremely small. Also, the minimum negative entropy decreases with increasing learning rates.

In summary, regardless of the choice of learning rates, the distribution of MPVAN is always more uniform than that of VAN, and mode collapse occurs at lower temperatures. Therefore, MPVAN delays the emergence of mode collapse.

\end{document}